\newcommand{\argmin}{\mathop{\mathrm{argmin}}}
\newcommand{\real}{\mathbb{R}}
\newcommand{\Ee}{{\mathbb E}}
\newcommand{\E}{\Ee}
\newtheorem{theorem}{Theorem}
\newtheorem{lemma}{Lemma}
\newtheorem{remark}{Remark}
\newtheorem{example}{Example}[section]
\newcommand{\grad}{\nabla}
\newcommand*{\rom}[1]{\expandafter\@slowromancap\romannumeral #1@}
\newcommand*{\Scale}[2][4]{\scalebox{#1}{$#2$}}
\renewcommand*\env@matrix[1][\arraystretch]{%
  \edef\arraystretch{#1}%
  \hskip -\arraycolsep
  \let\@ifnextchar\new@ifnextchar
  \array{*\c@MaxMatrixCols c}}
\begin{document}

\title{Scalable methods for Bayesian selective inference}
\author{Snigdha Panigrahi, Jonathan Taylor}
\runauthor{Panigrahi and Taylor}

\begin{abstract}
Modeled along the truncated approach in \cite{selective_bayesian}, selection-adjusted inference in a Bayesian regime is based on a \textit{selective posterior}. Such a posterior is determined together by a generative model imposed on data and the selection event that enforces a truncation on the assumed law. The effective difference between the \textit{selective posterior} and the usual Bayesian framework is reflected in the use of a truncated likelihood. The normalizer of the truncated law in the adjusted framework is the probability of the selection event; this typically lacks a closed form expression leading to the computational bottleneck in sampling from such a posterior. The current work provides an  optimization problem that approximates the otherwise intractable \textit{selective posterior} and leads to scalable methods that give valid post-selective Bayesian inference.
The selection procedures are posed as data-queries that solve a randomized version of a convex learning program which have the advantage of preserving more left-over information for inference. 

We propose a randomization scheme under which the approximating optimization has \textit{separable} constraints that result in a partially \textit{separable} objective in \textit{lower} dimensions for many commonly used selective queries. We show that the proposed optimization gives a valid exponential rate of decay for the selection probability on a large deviation scale under a Gaussian randomization scheme. On the implementation side, we offer a primal-dual method to solve the optimization problem leading to an approximate posterior; this allows us to exploit the usual merits of a Bayesian machinery in both low and high dimensional regimes when the underlying signal is effectively sparse. We show that the adjusted estimates empirically demonstrate better frequentist properties in comparison to the unadjusted estimates based on the usual posterior, when applied to a wide range of constrained, convex data queries. 
\end{abstract}

\maketitle


\section{Introduction}
\label{introduction}

A line of works \cite{exact_lasso, optimal_inference, tibshirani2016exact, loftus_quadratic, yang2016selective, randomized_response} has established methodology for exact and asymptotic selection-adjusted inference that provide frequentist coverage guarantees in the regression framework. The driving motivation to adjust for selection is that analysts commonly conduct queries on a database in order to select inferential questions of interest about the population parameters. Inference after such interactions with the data lacks frequentist properties like target coverage when the same data set is used later for answering these very same questions. A Bayesian perspective on modeling the post-selective problem as a truncation is advocated in \cite{yekutieli2012adjusted} and extensions of the former work to the more general set-up of linear models are proposed in \cite{selective_bayesian}. These works propose the use of a fixed parameter view where the truncation is applied to the data exclusively conditional on the parameter. This alters the posterior distribution after selection unlike the usual Bayesian variable selection framework in \cite{mitchell1988bayesian, george1997approaches} where the posterior is known to display inadapativity to selection. 

More precisely, the truncated view point on inference is based on a \textit{selective posterior}, formed by a truncated likelihood in conjunction with a prior that allows an analyst to inject a priori information on parameters in a model after selection. Such an approach has the additional flexibility in allowing the analyst to fix a model based on a parametrization that can be guided by a selection procedure.
Motivated by the conditional approach of modeling Bayesian inference, the current work focuses on developing concrete, scalable methods that will allow the analyst to exploit the full potent of a Bayesian machinery post a wide range of constrained-convex learning programs. The Bayesian problem is by no means a trivial extension of the existing frequentist methods as it requires a closed form expression for the normalizer of the truncated likelihood. We describe the computational difficulties in providing Bayesian inference in the truncated framework and the contributions of this work more formally after introducing the \textit{selective posterior}.

\subsection{Selective posterior} 
\label{posterior}
A selective posterior modeled along the conditional approach has two components - a truncated likelihood and a prior distribution on the parameters in the likelihood. The truncation is imposed by selection as the analyst is interested in providing inference for a target parameter only if he observed the associated selection event. A generative model that the analyst is willing to impose on data post selection, together with the truncation to all realized values that lead to an observed selection event determine the truncated likelihood. The prior allows him to inject information on the target from his existing knowledge. 

Formally, variable selection is based on an observed data vector $S$ and the selection event of observing an active set of variables $ \hat{E}(S)=E$ can be described as $\{s: \hat{E}(s)=E\}$, the set of realizations of $S$ that lead to $E$. It is only after selection that a model is defined, in this case, a generative Bayesian model with a likelihood parametrized by $\beta^*$, denoted as $f(s\lvert \beta^*)$ and a prior $\beta^*\sim\pi.$ The goal is to provide inference for a target determined by selection event $E$; that is, we infer about the target only if we observe $E$. This truncates the generative law of the data conditional on the parameter, resulting in the selection adjusted likelihood
\[f(s\lvert \beta^*)1_{\{\hat{E}(s)=E\}}/\mathbb{P}(\hat{E}(S)=E\lvert \beta^*).\] 
In conjunction with the prior $\pi(\beta^*)$, selective Bayesian inference about the data adaptive target is possible by sampling from 
\begin{equation}
\label{selective:posterior:0}
\pi_E(\beta^*\lvert s) \propto \pi(\beta^*)\cdot \cfrac{f(s\lvert \beta^*)}{\mathbb{P}(\hat{E}(S)=E\lvert \beta^*)}
\end{equation}
namely, the selective posterior. 

As is evident from above, the normalizer of the truncated likelihood is the probability of the selection event, computed as a function of the parameters in the generative density. 
While sampling from the truncated likelihood in a frequentist regime does not require knowledge of the normalizer (treated as a constant), the normalizer that typically lacks a closed form expression does contribute to the selective posterior in a Bayesian paradigm. Implementing a sampling scheme then becomes impossible in the absence of an expression for the normalizer to the truncated law. \cite{selective_bayesian} identifies this technical hurdle and proposes an approximation to general affine selection probabilities which gives rise to a pseudo \textit{selective posterior}. Sampling from the \textit{selective posterior} necessitates computing the approximation, cast as an optimization problem for each draw of the sampler. The efficiency of any standard sampler thereby, hinges on the computational cost of solving the optimization objective associated with this approximation. In most cases, this can be very expensive and hard to scale with larger sample size and regression dimensions. 

We propose in this work a randomization scheme for commonly used selection queries and offer an approximating optimization under the same to facilitate sampling from an approximate selective posterior. The three major gains in the current work associated with the proposed optimization are
\begin{itemize}
\vspace{-0.5em}
\setlength\itemsep{-0.1em}
\item an objective with simpler constraints on the optimizing variables, as opposed to polyhedral constraints in \cite{exact_lasso}
\item a partially separable objective function with separability in the selective constraints
\item a reduction in dimensions of the optimization objective (an objective with smaller number of optimizing variables).
\end{itemize}
Typically, for popular constrained queries like marginal screening, Lasso, forward stepwise etc., the optimization solves an objective in $\min(d+|E|,p)$ dimensions with $d$ as the size of the observed data vector $S$, $p$ as the dimension of regression and $|E|$ as the size of active set. The key idea behind these reductions is an upper bound to the normalizer that capitalizes on the structure of an inversion map associated with the randomized selective query and a change of measure induced by the same, discussed in Section \ref{background}. 

The problem of analytically getting an approximation for the normalizer is similar to the goals of variational Bayesian approaches in \cite{minka2001family, hoffman2013stochastic} that use a known parametric distribution to obtain an approximation to an intractable posterior based on the KL-divergence between the two posteriors. We adopt a different approach here by approximating an intractable integral, the normalizer of the truncated law as a function of the parameters in the model; we show that this approximation gives an asymptotic large deviation behavior of the exact normalizer under Gaussian randomization schemes. 

These contributions allow a wider scope of applications of the truncated Bayesian approach to different generative models, randomization schemes and constrained selective queries. Such reductions become very useful in not just higher dimensions, but also, in providing inference after multiple selective queries. Below, we describe methods that demonstrate scalability of truncated Bayesian inference in both high and low dimensional regimes of inference. The effectiveness of the proposed methods is corroborated through Bayesian effect estimates with superior frequentist properties for data-mined variants in a real data set that investigates gene associations with local variants.

\subsection{A motivating example}
\label{motivation}
\label{example}
Before introducing our methods, we give an example that motivates the readers towards the inferential gains of the selective posterior over the more common inadaptive Bayesian approach. 
Consider data $Y\in \real^n$ and a fixed predictor matrix $X \in \real^{n \times p}$ with columns scaled by $1/\sqrt{n}$ such that the response is generated as $Y = X \beta + \epsilon, \;\epsilon \sim \mathcal{N}(0, \sigma^2 I_n)$ given a $\beta \in \real^p$ and $\sigma^2=1$. 
 An analyst decides to run Lasso on data $(Y, X)$ in order to choose $E$, a set of selected predictors.
Not having access to the actual generative model, he assumes the screened model from Lasso as a plausible model on his data, that is
$Y\sim \mathcal{N}(X_E\beta_E, I_n)$ and a non-informative prior $\pi$ on the parameters $\beta_E$ in the selected model to offer Bayesian inference on $\beta_E$. Ignoring selection, he uses the unadjusted posterior on $\beta_E$
\begin{equation}
\label{unadjusted:posterior}
 \pi(\beta_E\lvert Y=y) \propto \pi(\beta_E) \cdot {\exp\left(-\|y- X_E\beta_E\|_2^2/2\sigma^2\right)}
\end{equation}
to report credible intervals and the posterior mean as inference for target $\beta_E$. 

We compare the estimates from the above approach of the analyst to truncated inference post a randomized version of the Lasso query. We give inference on $\beta_E$ using the same selected model and non-informative prior as the analyst where $E$ is the output from
\begin{equation}
\label{lasso:random}
\text{minimize}_{\beta}\frac{1}{2}\|y-X\beta\|_2^2-\omega^T\beta + \lambda \|\beta\|_1 + \frac{\epsilon}{2}\|\beta\|_2^2.
\end{equation}
Randomization enters the objective as $\omega^T\beta$, perturbing selection that is otherwise based only on $y$; the above randomized version of Lasso has been proposed in \cite{harris2016selective}. The objective has a small added ridge penalty $\epsilon = {1}/{\sqrt{n}}$ for existence of a solution and tuning parameter is set as $\lambda = \mathbb{E}[\|X^T \psi\|_{\infty}]$ as proposed in \cite{negahban2009unified} where $\Psi \sim \mathcal{N}(0, I)$. On a high level, our method of providing estimates in the truncated regime involves approximating the intractable posterior truncated to the realizations $(y,\omega)$ that lead to the same selection event. We finally use a Langevin walk-based sampler to provide adjusted Bayesian inference based on the approximate posterior. 

To compare our methods against the traditional Bayesian inference, we conduct the below experiment with two different generative mechanisms, Model I is a frequentist model with no signal and Model II is a Bayesian model. Let
$X \in \real^{n \times p}, \;n=200, \;p=1000$ be a design matrix with independent Gaussian entries normalized to have column norm $1$. 
\begin{itemize}
\setlength\itemsep{0.1em}
\item Model I: Draw in each trial $Y \sim \mathcal{N}(0, I_n);\;n =200.$
\item Model II: Draw in each trial $\beta \sim \pi^{\text{TRUE}}$ and $Y\lvert X, \beta$ as follows
\begin{enumerate}[(1).]
\setlength\itemsep{-0.2em}
\item $\beta \sim \pi^{\text{TRUE}}(\beta) = \prod_{j=1}^{1000} \pi^{\text{TRUE}}_j(\beta_j)$
with \[\pi^{\text{TRUE}}_j(\beta_j) = 0.90 \cdot \frac{1}{2b_1}\exp\left(-|\beta_j|/b_1\right) + 0.10 \cdot \frac{1}{2b_2}\exp\left(-|\beta_j|/b_2\right);\]
$b_1$ and $b_2$ represent the variance parameters of the Laplace densities in the mixture prior. In the below experiment, we set $b_1 = 0.1$ and $b_2= 1.0$ to generate an effectively sparse vector $\beta \in \real^p,\; p=1000$.
\item $Y\lvert X, \beta = X\beta + \epsilon ; \; \epsilon \sim \mathcal{N}(0,I_n);\; n =200.$ 
\end{enumerate}
\end{itemize}
In each trial, set target and model of inference based on observed $E$ post Lasso query in \eqref{lasso:random} as described above; we compare estimates based on \eqref{unadjusted:posterior} in the untruncated regime against our method of inference that gives adjusted estimates. To conduct the randomized query \eqref{lasso:random}, we draw $\omega$ as an instance of $\Omega\sim \mathcal{N}(0,\tau^2 I_p)$ in every trial. The below table gives a comparison of coverage of the credible intervals and risk in the frequentist model and of Bayesian FCR and Bayes risk of the posterior mean in the Bayesian model after $50$ trials. The target coverage for the intervals is set at $90\%$. Bayesian FCR in \cite{yekutieli2012adjusted} is defined as $\mathbb{E}_{\beta, Y}(V/\max(R,1))$, where $V$ is the number of non-covering credible intervals and $R=|E|$ is the number of intervals constructed after selection.
Consistent with coverage, we report the proportion of $|E|$ intervals covering the target in the Bayesian model in Table \ref{table:motivating:example:2} and call it CR. Unlike the non-randomized intervals in \cite{exact_lasso} that are known to grow very wide, the power inherited from randomization is reflected in shorter lengths of the adjusted intervals. The results clearly highlight the superior frequentist properties of our methods, both in terms of coverage of credible intervals and risk of posterior mean.


\begin{center}
\vspace{-1.0em}
\small
\captionof{table}{Model I :  Coverage, Risk and length of intervals}
\label{table:motivating:example:1}
\bgroup
\def\arraystretch{1.}
\begin{tabular}{ |c|c|c|c|c| }
\hline
\bf{Method} & \bf{Coverage} & \bf{Risk} &  \bf{Lengths} \\
\hline
\hline
Truncated inference & $89.70\%$ & $1.81$ & $4.41$\\
\hline
Unadjusted inference & $51.38\%$ & $3.38$ & $3.34$\\
\hline
\end{tabular}
\egroup
\end{center}

\begin{center}
\vspace{-.5em}
\small
\captionof{table}{Model II :  Bayesian CR, Bayes risk and length of intervals}
\label{table:motivating:example:2}
\bgroup
\def\arraystretch{1.}
\begin{tabular}{ |c|c|c|c|c| }
\hline
\bf{Method} & \bf{CR} & \bf{Bayes risk} &  \bf{Lengths} \\
\hline
\hline
Truncated inference & $90.99\%$ & $1.49$ & $4.49$\\
\hline
Unadjusted inference & $34.86\%$ & $4.28$ & $3.34$\\
\hline
\end{tabular}
\egroup
\end{center}

The rest of the paper is organized as follows. Section \ref{background} outlines the truncated framework, giving the recipe for adjusted Bayesian inference using a selective posterior. Section \ref{optimization} lays out the backbone of the paper, the approximating optimization problem that we solve to sample from a tractable version of the selective posterior and provides a sampler that targets the approximate posterior. Section \ref{LDP} shows asymptotic validity of the finite sample bounds in Section \ref{optimization} to the otherwise unavailable normalizer for non-local sequences of parameter. Section \ref{sampler} lays out the optimization-based approach for popular selection queries. Section \ref{experiments} includes simulations that demonstrate the inferential gains associated with the truncated Bayesian methods in the current work over the unadjusted analog. Many of these examples bring to light the robustness of our methods to model mis-specifications. Section \ref{experiments} concludes with an application of our methods to provide adjusted Bayesian effect size estimates for local genetic variants (GTEx gene association data set) that have been data-mined as the strongest effects. 

\section{A formal background}
\label{background}

\subsection{A randomized query and an inversion map} 
\label{randomized:query}
Selection events can be broadly viewed as outputs from queries on a data-base. In the context of variable selection, we are typically interested in the active set of coefficients obtained upon solving convex optimization problems. As a follow-up on the recent work on randomized inference in \cite{randomized_response, harris2016selective}, these queries are randomized versions of learning problems with a convex loss $\ell(S(X,y),.)$ and a convex penalty $\mathcal{P}_{\lambda}(.)$ with tuning parameter $\lambda$. Though the skepticism with randomization is that different instances of randomization can result in different selection outputs, we view it to be similar in spirit to the much practiced data-splitting where difference in outputs can result from various splits. Just as we can aggregate over the outputs from multiple splits of the data, we can similarly combine selections from multiple queries on the data-base as illustrated in \cite{markovic2016bootstrap}. Also, sharing similarity with the concept of reusable hold-out introduced in the field of differential privacy \cite{dwork2015preserving}, these forms of randomized inference come with the merit of higher statistical power during inference. For the Bayesian problem, randomization results in empirical improvements in the frequentist properties associated with the selective posterior, see \cite{selective_bayesian} for examples illustrating robustness of the randomized-credible intervals. The empirical results in Section \ref{experiments} of the current work corroborate these merits of a randomized Bayesian procedure, reflected in the coverage properties and shorter lengths of intervals. To add to these advantages, we leverage randomization to obtain significant computational reductions in solving an approximating optimization to sample from the selective posterior in the current work. The gains associated with randomized queries become clear after details in Section \ref{optimization}.

A randomized selective query taking a convex loss $\ell(S(X,y),.)$ and a convex penalty $\mathcal{P}_\lambda(.)$ as inputs, assumes the canonical form 
\begin{equation}
\label{canonical:randomized:program}
\hat{\beta}(s,\omega)= \argmin_{\beta} \ell(s;\beta) + \mathcal{P}_\lambda (\beta) -\omega^T \beta
\end{equation}
with data realization $s$ and randomization instance $\omega$, where $S\sim F$ independent of randomization $ \Omega \sim G$. 
The above algorithm has a linear term in randomization $\omega$, drawn from a distribution $G$ with a density $g$, fully supported on $\mathbb{R}^p$. This can be viewed as selection with a perturbed version of data, hence the term ``randomized" program. Queries of the above form are termed as objective perturbation in the privacy literature, see \cite{chaudhuri2009privacy, chaudhuri2011differentially}. Some randomized programs like the Lasso have an additional $\ell_2$ penalty term $ \frac{\epsilon}{2}\|\beta\|^2_2$ as in ridge regression in \cite{elastic_net} to enforce existence of a solution. The analyst has access to the output $E$, a function of $(s, \omega)$ from such a query in the inferential stage, typically the set of active coefficients along with their signs: see \cite{kac_rice, spacings, exact_lasso}. Selective inference seeks to overcome bias from having known the output of query prior to inference through the conditional approach.

 \cite{optimal_inference} presents a more natural analog of classical data-splitting in the form of data-carving, which advocates a random split of the data for selection, but allows the analyst to use the entire data for inference. A data-carved query that is performed on a randomly chosen split of the data is given by
\begin{equation}
\label{canonical:carved:program}
\hat{\beta}(s^{(1)},\omega)= \argmin_{\beta} \frac{1}{r}\ell(s^{(1)};\beta) + \mathcal{P}_\lambda(\beta).
\end{equation}
with $r$ as the fraction of data-samples used in selection and $s^{(1)}$ as a random split of the data-vector $s$.
\cite{markovic2016bootstrap} shows that the above selection can be cast as a randomized query of the form \eqref{canonical:randomized:program}. This can allow an analyst to collect new data and view prior selection on existing data as a split on the updated data-base. Hence, it can facilitate valid inference post already conducted exploratory analyses on existing data-bases while having extra power in comparison to an analysis on the new data set only. We discuss the data-carved version of Bayesian inference in more details later in Section \ref{sampler}.


The starting point of achieving computational reductions in approximating the selective posterior is an inversion map that characterizes the output from randomized queries. Such a map is obtained from the subgradient equation of \eqref{canonical:randomized:program}. The canonical selection event of observing an active set of coefficients $E$ with signs $z_E$ can be described in terms of the randomization $\omega$ and data instance $s$ using the inversion map. Denoting by $\hat{\beta}=(\hat\beta_E,0)$ the solution from a query in \eqref{canonical:randomized:program} with $\hat\beta_E$ as the active non-zero coefficients, the inversion map is given by
\begin{equation}
\omega = \partial \ell(s;(\hat\beta_E,0)) + \partial \mathcal{P}_\lambda((\hat\beta_E,0)).
\end{equation}
The above equation maps the randomization instance to realized data $S=s$ and  optimization variables $O= (\hat\beta_E, v_{-E})$ where $\hat\beta_E$ denotes the active coefficients and $v_{-E}$ represents the inactive sub-gradient corresponding to the inactive coordinates of $\partial \mathcal{P}_\lambda((\hat\beta_E,0))$. We denote the optimization variables corresponding to the active coordinates as $O_E$ and the ones corresponding to the inactive subgradient variables as $O_{-E}$ from now on, referring them as active and inactive optimization variables respectively.
Post an affine randomized selection event, the canonical inversion map that is the basis of a new measure takes the form
\begin{equation}
\label{randomization:map}
\omega(s,o) = Ds + P o + q
\end{equation}
with $s$ and $o$ representing data and optimization variables respectively and $D,P, q$ are fixed.

The scope of randomized queries is quite broad in nature allowing for even discrete versions of randomizations like carving. In practice, the analyst may use a union of outputs $\bar{E} = E_1 \cup E_2 \cdots \cup E_k$ or the final model $E_k$ when $E_k\subset E_{k-1}\cdots \subset E_1$ (or some reasonable combination) based on a sequence of outputs $(E_1, E_2,...,E_k)$ from multi stagewise selective algorithms to determine a target and a generative mechanism in the inferential stage. We demonstrate the extension of our approach to multiple data queries using a combination of these model selection methods in \ref{multi:selection} under Section \ref{optimization}. This finds similarity in the method of approximately estimating expectations after allowing an analyst to repeatedly query a database in \cite{dwork2015preserving}.

\subsection{A Bayesian inferential scheme using inversion map}
\label{framework}
The ingredients for selective Bayesian inference are the same as the usual one, a prior and a likelihood, except that we replace the usual likelihood with a truncated one. 
To describe our inferential framework, we assume a model $f(.\lvert \beta^*)$, parametrized by $\beta^*$ post-selection on data $S\in \real^d$ and fix a target denoted as $\Theta_E(\beta^*)$. In the linear model settings with a fixed design matrix $X\in \mathbb{R}^{n\times p}$, $f(.\lvert \beta^*)$ might correspond to a family of models $\mathcal{F}_{\beta^*} = \{\mathcal{N}(X^*\beta^*, \sigma^2 I): \beta^*\in \mathbb{R}^k\}$
for a known $\sigma$ with $X^*=\begin{bmatrix} X_{i_1} & \cdots X_{i_k}\end{bmatrix}$, for a set of indices $\{i_1,i_2,\cdots, i_k\}\subset \{1,2,\cdots, p\}.$ We emphasize here that we do not have an idea about $\beta^*$ before we run a selection mechanism like the Lasso. There are some settings where the parameterization exists before selection and does not change, example being the saturated model of \cite{exact_lasso}. Typically, we are running Lasso in order to find something that might be an interesting parameterization. 

A common target of inference post selection of an active set $E$ is the usual population coefficient corresponding to ordinary least squares on the selected model $E$, that is
\[\Theta_E(\beta^*) = (X_E^T X_E)^{-1} X_E^T\mathbb{E}[Y\lvert \beta^*] = (X_E^T X_E)^{-1} X_E^T X^*\beta^*.\]
With a random design matrix, the target of inference can be described as 
\[\Theta_E(\beta^*) = (\mathbb{E}[X_E^T X_E])^{-1} \mathbb{E}[X_E^T Y\lvert \beta^*],\]
with the generative family of models parametrized as $\{f(.\lvert \beta^*): \mathbb{E}_f[Y\lvert X] = X^*\beta^*\}$.
\begin{remark}
\emph{\it Generative models and targets:}
The selected model described in \cite{optimal_inference} corresponds to parametrization $\beta^*= \beta_E\in \mathbb{R}^{|E|}$ with $E$ being the observed active set and the saturated model corresponds to a parametrization $\beta^* =\mu \in \mathbb{R}^n$. The corresponding $X^*$'s in the two models are $X_E$ and the identity matrix $I_n$ respectively. Of course, other models are possible. The analyst can allow selection to guide him to a target and a generative model, though these choices do not necessarily have to agree with the observed selected set $E$. The methods of inference described here are flexible to allow him to use expert opinion on a plausible generative model parametrized by $\bar{E}$, and a possibly more interesting target $\Theta_{\bar{E}}(\beta^*)$, where $\bar{E}$ is determined through $E$. To be able to highlight the flexibility of our method to various parametrizations of the mean, we use the general notation $\beta^*$ to denote the parameters underlying the Bayesian model assumed post selection.
\end{remark}
\begin{remark} 
\emph{\it Prior on variance parameter:} The variance $\sigma$ in the generative likelihood can be modeled in a Bayesian paradigm by putting a joint prior on $(\beta^*, \sigma)$. We do not delve into details of incorporating a Bayesian model on the variance in the current draft; hereafter, we stick to a fixed variance setting.
\end{remark}

Using a change of measure based on the inversion map in \eqref{randomization:map}, the joint truncated density at $(s,o)$ corresponding to a generative model $f$ on data in $\mathbb{R}^d$ with parameters $\beta^*$ and randomization density $g\in \mathbb{R}^p$  decouples as
\begin{equation}
\label{selective:density}
h_E(s, o) \propto  |J|  \cdot f(s\lvert \beta^*)\cdot g(Ds + P o + q)
\end{equation}
with support 
$$\mathcal{R}= \{(s,o): \hat{E}(s,o) = E, z_{\hat{E}}(s,o)= z_E\}=  \mathbb{R}^d \times \mathcal{R}_O.$$
$|J|$ is a Jacobian reflecting the change of measure, a constant for affine inversion maps as in \eqref{randomization:map}. The support is unrestricted on data $s$ and constrained to $\mathcal{R}_O\subset \mathbb{R}^p$, representing constraints on optimization variables, imposed by selection output. \cite{harris2016selective} advocates this new measure in order to enable a frequentist to sample from a density with a fairly simple support region as opposed to more general affine constraints on data and randomization. 

Coming back to a Bayesian setting, the selective posterior for generative parameters $\beta^*$ given data $S$ when $\beta^*\sim \pi(\beta^*)$  is formed by appending the marginal selective density of $S$ to the prior $\pi(.)$. The truncated marginal of $S$ given parameters $\beta^*$ is obtained by marginalizing over $O$ in the joint density \eqref{selective:density}. The selective posterior is thus, given by 
\begin{equation}
\label{selective:posterior}
 \pi_E(\beta^*\lvert S) \propto \pi(\beta^*) \cdot \cfrac{f(s\lvert \beta^*)}{\mathbb{P}((S,O)\in \mathcal{R}\lvert \beta^*)}.
\end{equation}
The above posterior is however intractable as the normalizer 
\[{\mathbb{P}((S,O) \in \mathcal{R}\lvert \beta^*) =  \int_{\mathcal{R}}   |J|  \cdot  f(s\lvert \beta^*)g(D s + P o + q)dods}\]
 has no exact closed form expression. The problem reduces to computing the normalizer $\mathbb{P}((S,O) \in \mathcal{R}\lvert \beta^*)$; we focus on this in Section \ref{optimization}.
 
\section{An approximating optimization}
\label{optimization}


\subsection{Approximate normalizer based on inversion map}
\label{approx:normalizer}

Using the inversion map that defines the selection output from a query in \eqref{randomization:map}, we derive an approximating optimization with a constrained objective in $d+p$ dimensions that bounds from above the log normalizer. 
We state below the first theorem of this paper that gives rise to an upper bound on the volume of a convex and compact selection region $\mathcal{R}$ with respect to the joint density of data and optimization variables. It involves computating the log-MGF of the augmented vector of data and optimization variables with respect to a transformed measure induced by the inversion map in \eqref{randomization:map}. 
\begin{theorem}
\label{approximate:prob} Denoting $\Lambda_f^*(.\lvert \beta^*) $ as the convex conjugate of the log-MGF $\Lambda_f(.\lvert \beta^*)$ of data vector $S \in \real^d$ and $ \Lambda_g^*(.)$ as the conjugate of the log-MGF $\Lambda_g$ of randomization $\Omega\in \real^p$, a Chernoff upper bound to the exact selection probability $\log\mathbb{P}((S,O) \in \mathcal{R}\lvert \beta^*)$ for convex, compact $\mathcal{R} \subset \real^d \times \real^p$ under the canonical inversion map in \eqref{randomization:map}
is given by
\begin{equation}
\label{chernoff:version}
\begin{aligned}
-\inf_{s,o \in \mathcal{R}} \Big\{\Lambda_f^*(s\lvert \beta^*)+  \Lambda_g^*(Ds + P o + q)\Big\}
\end{aligned}
\end{equation}
\end{theorem}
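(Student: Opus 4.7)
The plan is to establish Theorem~\ref{approximate:prob} through a multivariate Chernoff bound that exploits the independence of $S$ and $\Omega$ together with Fenchel duality.

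First, I would undo the Jacobian change of variables. Let $\widetilde{\mathcal{R}} = \{(s,\, Ds + Po + q) : (s,o) \in \mathcal{R}\}$ be the image of $\mathcal{R}$ under the affine map dictated by the inversion map \eqref{randomization:map}; since $\mathcal{R}$ is convex and compact, so is $\widetilde{\mathcal{R}}$. The probability of interest equals $\mathbb{P}((S,\Omega) \in \widetilde{\mathcal{R}} \mid \beta^*)$, now computed against the unadjusted product density $f(s \mid \beta^*)\, g(\omega)$, with the Jacobian $|J|$ absorbed by the change of variables. This reduces the problem to bounding the probability that the independent pair $(S, \Omega)$ lies in a convex compact set.

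Second, I would run the shifted-exponential Chernoff argument. For any dual direction $(\lambda_1, \lambda_2) \in \real^d \times \real^p$ and any $(s_\star, \omega_\star) \in \widetilde{\mathcal{R}}$ satisfying the supporting-hyperplane inequality $\lambda_1^T s + \lambda_2^T \omega \geq \lambda_1^T s_\star + \lambda_2^T \omega_\star$ for every $(s, \omega) \in \widetilde{\mathcal{R}}$, the indicator $\mathbf{1}_{\widetilde{\mathcal{R}}}(s,\omega)$ is dominated by $\exp\{\lambda_1^T(s - s_\star) + \lambda_2^T(\omega - \omega_\star)\}$. Taking expectations against the product law of $(S,\Omega)$ then yields
\begin{equation*}
\mathbb{P}((S,\Omega) \in \widetilde{\mathcal{R}} \mid \beta^*) \;\leq\; \exp\bigl\{\Lambda_f(\lambda_1 \mid \beta^*) + \Lambda_g(\lambda_2) - \lambda_1^T s_\star - \lambda_2^T \omega_\star\bigr\}.
\end{equation*}

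Third, I would tighten the bound by an optimal choice of support point. Select $(s_\star, \omega_\star)$ to attain $\inf_{(s, \omega) \in \widetilde{\mathcal{R}}}\{\Lambda_f^*(s \mid \beta^*) + \Lambda_g^*(\omega)\}$; existence is guaranteed by compactness and lower semicontinuity of the Fenchel conjugates. First-order optimality at this minimizer produces subgradients $(\lambda_1, \lambda_2) \in \partial\Lambda_f^*(s_\star \mid \beta^*) \times \partial \Lambda_g^*(\omega_\star)$ whose pairing with any feasible direction in $\widetilde{\mathcal{R}}$ is nonnegative, i.e., exactly the separation property required in the previous step. At such a primal-dual pair, Fenchel--Young holds with equality, so $\Lambda_f(\lambda_1 \mid \beta^*) - \lambda_1^T s_\star = -\Lambda_f^*(s_\star \mid \beta^*)$, and similarly for $g$. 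Substituting, taking logarithms, and translating $\widetilde{\mathcal{R}}$ back to $\mathcal{R}$ via \eqref{randomization:map} delivers the claimed bound \eqref{chernoff:version}.

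The main technical point is the coupling between the primal minimizer and a suitable dual tilt carried out in the third step: one must verify that the subgradient supplied by the KKT conditions at $(s_\star, \omega_\star)$ genuinely serves as a valid Chernoff tilt, satisfying the one-sided inequality over all of $\widetilde{\mathcal{R}}$ rather than only locally. Convexity of $\widetilde{\mathcal{R}}$ together with standard convex-analytic regularity (nonempty interior, and finiteness of $\Lambda_f$ and $\Lambda_g$ in a neighborhood of the optimal dual values) handles this cleanly; the rest of the argument is an essentially routine packaging of the duality between log-MGFs and their Fenchel conjugates.
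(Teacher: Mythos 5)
Your proposal is correct, and it reaches the bound by a genuinely different route than the paper. The paper stays in $(S,O)$ coordinates: it writes the crude Chernoff family $\log\mathbb{P}\le -\inf_{(s,o)\in\mathcal{R}}\{\alpha_1^Ts+\alpha_2^To\}+\Lambda_{(S,O)}(\alpha)$, optimizes over $\alpha$, and then invokes a minimax (Sion-type) equality on the convex compact $\mathcal{R}$ to swap $\sup_\alpha$ and $\inf_{(s,o)}$, turning the inner supremum into the Fenchel conjugate of the joint log-MGF of $(S,O)$ — which it computes explicitly via the change of variables $o=P^{-1}(\omega-Ds-q)$ and independence of $S$ and $\Omega$. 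You instead push the set forward to $(S,\Omega)$ coordinates and prove the $\inf\sup$ form directly, by exhibiting the particular tilt $(\lambda_1,\lambda_2)\in\partial\Lambda_f^*(s_\star)\times\partial\Lambda_g^*(\omega_\star)$ supplied by first-order optimality at the minimizer of the rate function over the convex set, and closing with Fenchel--Young equality. In effect you give a constructive proof of the hard inequality that the paper delegates to the minimax theorem. What your route buys is that it is self-contained (no minimax theorem) and identifies the certifying exponential tilt explicitly; what it costs is the mild extra regularity you correctly flag — nonemptiness of the subdifferentials of $\Lambda_f^*$ and $\Lambda_g^*$ at the minimizer (automatic in the relative interior of their domains, but potentially delicate on the boundary), which the paper's argument never needs since it only ever uses the conjugates as suprema. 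The two computations of the MGF agree because your push-forward of the set and the paper's change of measure on the density are the same substitution seen from opposite sides; both absorb the Jacobian $|J|$ identically.
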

We prove the above in Appendix \ref{A:1}. While the above upper bound does hold for compact selection regions, the canonical selective constraints lead to a selection region of the form 
\[\mathcal{R} = \mathcal{R}_S \times \mathcal{R}_O \text{ with } \mathcal{R}_S =\mathbb{R}^d\] 
and $\mathcal{R}_O$ is typically tensor of orthants and cubes; this lacks compactness. The upper bound derived in \ref{approximate:prob} can still be applied as an approximation as we can work with a sufficiently large compact and convex subset of $\mathcal{R}$ that has an almost $1$-measure under prior $\pi$. A smooth version of \eqref{chernoff:version} is seen to lead to better frequentist properties in \cite{selective_bayesian} in the non-randomized settings; in the current work, we opt for \eqref{barrier:version} to solve a smooth objective in place of a constrained optimization.

The bound-based approximation above is given by
\begin{equation}
\label{chernoff:min}
-\inf_{s\in \mathbb{R}^d,o \in \mathbb{R}^p} \Big\{\Lambda_f^*(s\lvert \beta^*)+  \Lambda_g^*(Ds + P o + q)+\chi_{\mathcal{R}_O}(o)\Big\}
\end{equation}
with $\chi_{\mathcal{R}_O}(.)=-\log 1_{\mathcal{R}_O}$. In particular,  $\chi_{\mathcal{R}_O}(.)$ can be interpreted as a function with a uniformly $0$ penalty within the selection region. An improved approximation to the selection probability can be obtained by smoothing the discrete penalty $\chi_{\mathcal{R}_O}(.)$ in the bound with a barrier penalty $b_{\mathcal{R}_O}(.)$, which imposes a continuously decaying penalty as distance from the selective boundary increases. This leads to a smooth, unconstrained version of \eqref{chernoff:version} to approximate $\log\mathbb{P}((S,O) \in \mathcal{R}\lvert \beta^*)$ and is given by 
\begin{equation}
\label{barrier:version}
{ - \inf\limits_{s \in \mathbb{R}^{d},\;o \in \mathbb{R}^{p}} \Big\{\Lambda_f^*(s\lvert \beta^*) +  \Lambda_g^*(Ds + P o + q) + b_{\mathcal{R}_O}(o)\Big\}}
\end{equation}
using a \textit{barrier penalty} $b_{\mathcal{R}}(.)$ on affine constraints induced on the optimization variables.
The gain with \eqref{barrier:version} in comparison to the prior work is a much easier objective function as the canonical constraints on the optimization variables simplify to sign and cube constraints as in \cite{harris2016selective} instead of the complicated affine constraints as in \cite{exact_lasso}. We can further benefit from separability and achieve more reductions from such an approximation under certain randomizations, as seen later in \eqref{approx:decomposed}. 

The unconstrained optimization given by \eqref{barrier:version} in $d+p$ dimensions can be used to approximate selection probabilities under any randomization with a log-MGF $\Lambda_g$, that is independent of the data vector. In particular, we can use the optimization for inference post data carved queries of the form \eqref{canonical:carved:program}. Randomization in such queries takes the form of the gradient of difference of losses 
\[\omega = \partial \ell(s; (\hat\beta_E,0))-\frac{1}{r}\partial \ell(s^{(1)}; (\hat\beta_E,0)).\] 
and is asymptotically independent of the data vector for a Gaussian generative model and marginally an asymptotic centered Gaussian with a covariance $\Sigma_g$. Using the conjugate of the log-MGF of a Gaussian density, we obtain a tractable pseudo posterior. We illustrate inference based on the approximate selective posterior post selection on a random fraction of the data in Section \ref{sampler}.

\subsection{Reduction in optimization}
\label{reduction}

Under randomizations with a density supported on $\mathbb{R}^p$ that are independent in all $p$-component coordinates, we present an approximation that is based on smoothing a modified upper bound. For most common queries, it involves an optimization objective in $d+|E|$ dimensions, where $|E| \leq p$ is the size of the active set from the selective query. Note that the optimization in \eqref{barrier:version} involves $d+p$ optimizing variables. With the reduction in dimensions of the optimization, we make a significant improvement in scalability of our methods in high dimensional sparse problems, when $|E|\ll p$. Such a reduction is possible due to 
\begin{itemize}
\item decoupling of randomization density under independence 
\item the structure of the canonical inversion map in \eqref{decomp:rand} that allows an exact and easy calculation of the volume of the inactive selection region with respect to the density of $O_{-E}$. 
\end{itemize}

Before proceeding further, consider a break-up of the canonical randomization map into $E$ active and $p-|E|$ inactive coordinates.  Such a decomposition takes the form
\begin{equation}
\label{decomp:rand}
{\omega(s,o) = Ds + P o + q =\begin{pmatrix} D_E s + P_E o_E + q_E \\ D_{-E}s + P_{-E} o_{E} + o_{-E} + q_{-E}\end{pmatrix}}
\end{equation}
where $o_E$ denotes the active coefficients and $o_{-E}$ represents the inactive subgradient. The inversion map has such a structure in most commonly used queries like the Lasso, forward stepwise, thresholding etc. as we see later in Section \ref{sampler}. The density $g$ under a component-wise independent randomization scheme decouples into the active and inactive coordinates as
 \[{g(\omega) = g_E(\omega_E) \cdot g_{-E}(\omega_{-E})= g_E(\omega_E) \cdot \Pi_{j}g_{j,-E}(\omega_{j,-E})}.\] 
The constraints on $(o_E, o_{-E})$ for the canonical map are also separable and particularly, the inactive constraints are separable in each coordinate. The selection region induced by the selective constraints can thus, be denoted by
 \begin{equation}
 \label{separable:cons}
\mathcal{R}_O = \mathcal{R}_E \times \mathcal{R}_{-E}= \mathcal{R}_E\times \prod_{j} \mathcal{R}_{j,-E}
\end{equation}
where $\mathcal{R}_E$ represents the active constraint region, $\mathcal{R}_{-E}$ the inactive region and $\mathcal{R}_{j,-E}$, each component inactive constraint. The below theorem uses this separability in constraints and independence to obtain an upper bound on the logarithm of the normalizer of the truncated law. It involves computing the exact probability of the inactive subgradient variables lying in the selection region 
$\mathcal{R}_{-E} = \prod_j \mathcal{R}_{j,-E}$
as a function of realizations of the active optimization variable $o_E$ and data $s$.
\begin{theorem}
\label{bound:modified}
Under a randomization scheme composed of $p$ independent components $\Omega=(\Omega_1,\cdots,\Omega_p)$ and a selective query of the form \eqref{decomp:rand} yielding a compact and convex selection region 
\[\mathcal{R} = \mathcal{R}_S \times \mathcal{R}_O; \text{ where } \mathcal{R}_S \subset\mathbb{R}^d,\;  \mathcal{R}_O\subset \real^p\]
and $\mathcal{R}_O$ takes the form \eqref{separable:cons}, an upper bound for $\log \hat{\mathbb{P}}((S,O) \in \mathcal{R}\lvert \beta^*)$ for a compact, convex selection region $\mathcal{R}$
is given by
\begin{equation*}
 -\inf\limits_{s\in \mathcal{R}_S, o_E \in \mathcal{R}_E}\Big\{\Lambda_f^*(s\lvert \beta^*) +  \Lambda_{g_E}^*(D_E s +P_E o_E+ q_E)-\log\mathcal{B}(o_E;s) \Big\}
\end{equation*}
with
\begin{equation*}
\label{approx:decomposed:barrier}
\begin{aligned}
\mathcal{B}(o_E;& s) = \prod_{j=1}^{p-|E|}\int_{\mathcal{R}_{j,-E}}g_{j,-E}(o_{j,-E} +D_{j,-E}s + P_{j,-E} o_{E}+ q_{j,-E})do_{j,-E}
\end{aligned}
\end{equation*}
where $D_{j,-E}$, $P_{j,-E}$ and $q_{j,-E}$ denote the $j$-th rows of the matrices $D_{-E}, P_{-E}$ and $j$-th component of vector $q_{-E}$ in \eqref{decomp:rand} respectively.
\end{theorem}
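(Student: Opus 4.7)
My plan is to exploit two structural features of the randomized query in \eqref{decomp:rand}: independence of the $p$ randomization coordinates, and the fact that the inactive block of the inversion map has $o_{-E}$ entering with identity coefficient. Writing the exact selection probability as a triple integral of $|J|\,f(s\lvert\beta^*)\,g_E(D_E s+P_E o_E+q_E)\,g_{-E}(D_{-E}s+P_{-E}o_E+o_{-E}+q_{-E})$ over $\mathcal{R}_S\times\mathcal{R}_E\times\mathcal{R}_{-E}$, the coordinatewise factorization $g_{-E}=\prod_j g_{j,-E}$ together with the product structure $\mathcal{R}_{-E}=\prod_j\mathcal{R}_{j,-E}$ and the identity coefficient on $o_{-E}$ reduce the innermost integral to a product of one-dimensional integrals. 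Each such integral is exactly a factor of $\mathcal{B}(o_E;s)$, so after this marginalization the selection probability equals
\[\int_{\mathcal{R}_S\times\mathcal{R}_E}|P_E|\,f(s\lvert\beta^*)\,g_E(D_E s+P_E o_E+q_E)\,\mathcal{B}(o_E;s)\,ds\,do_E.\]

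Viewing the reduced integral as $\mathbb{E}[\mathcal{B}(O_E;S)\,\mathbf{1}_{\mathcal{R}_S\times\mathcal{R}_E}(S,O_E)]$ with $S\sim f$ and $\Omega_E\sim g_E$ independent and $O_E=P_E^{-1}(\Omega_E-D_E S-q_E)$, I apply a weighted Chernoff argument: for any tilt $\lambda=(\lambda_S,\lambda_{\Omega_E})$, one has the pointwise inequality
\[\mathcal{B}(o_E;s)\,\mathbf{1}_{\mathcal{R}}(s,o_E)\le\Bigl(\sup_{(s',o_E')\in\mathcal{R}_S\times\mathcal{R}_E}\mathcal{B}(o_E';s')\,e^{-\lambda_S^\top s'-\lambda_{\Omega_E}^\top\omega_E'}\Bigr)\,e^{\lambda_S^\top s+\lambda_{\Omega_E}^\top\omega_E}\]
with $\omega_E=D_E s+P_E o_E+q_E$. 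Taking expectation, using independence to factor $\mathbb{E}[e^{\lambda_S^\top S+\lambda_{\Omega_E}^\top\Omega_E}]=\exp(\Lambda_f(\lambda_S\lvert\beta^*)+\Lambda_{g_E}(\lambda_{\Omega_E}))$, and then taking logs yields
\[\log\mathbb{P}((S,O)\in\mathcal{R}\lvert\beta^*)\le\sup_{(s,o_E)\in\mathcal{R}_S\times\mathcal{R}_E}\bigl[\log\mathcal{B}(o_E;s)-\lambda_S^\top s-\lambda_{\Omega_E}^\top\omega_E\bigr]+\Lambda_f(\lambda_S\lvert\beta^*)+\Lambda_{g_E}(\lambda_{\Omega_E}).\]

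Minimizing over $\lambda$ and swapping the resulting inf and sup via Sion's minimax theorem then collapses the right-hand side to
\[-\inf_{(s,o_E)\in\mathcal{R}_S\times\mathcal{R}_E}\bigl\{\Lambda_f^*(s\lvert\beta^*)+\Lambda_{g_E}^*(D_E s+P_E o_E+q_E)-\log\mathcal{B}(o_E;s)\bigr\},\]
the stated bound, by Fenchel duality $\inf_\lambda[\Lambda(\lambda)-\lambda^\top x]=-\Lambda^*(x)$ applied separately to the two MGFs. Convexity in $\lambda$ of the inner expression and the assumed compactness and convexity of $\mathcal{R}_S\times\mathcal{R}_E$ are immediate, so Sion's hypotheses reduce to concavity of $(s,o_E)\mapsto\log\mathcal{B}(o_E;s)$.

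Verifying this concavity is the main technical obstacle. I plan to use Prekopa's log-concavity theorem on each factor of $\mathcal{B}$: the integrand $g_{j,-E}(D_{j,-E}s+P_{j,-E}o_E+o_{j,-E}+q_{j,-E})\,\mathbf{1}_{\mathcal{R}_{j,-E}}(o_{j,-E})$ is jointly log-concave in $(s,o_E,o_{j,-E})$ as soon as $g_{j,-E}$ is log-concave (true for Gaussian and essentially every standard randomization) and $\mathcal{R}_{j,-E}$ is convex (true for the canonical sign and cube inactive constraints). Marginalizing over $o_{j,-E}$ preserves log-concavity, and taking products preserves it too, so $\mathcal{B}$ is log-concave in $(s,o_E)$. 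With the theorem's compactness assumption on $\mathcal{R}$ in place Sion's theorem applies and the argument closes; the canonical case $\mathcal{R}_S=\mathbb{R}^d$ is handled by first carrying out the argument on a large compact convex subset carrying essentially all of the prior mass, as noted after Theorem~\ref{approximate:prob}.
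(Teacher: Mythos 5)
Your proof is correct and follows essentially the same route as the paper's: marginalize the inactive subgradient coordinates to reduce the selection probability to $\mathbb{E}[\mathcal{B}(O_E;S)\,\mathbf{1}_{\mathcal{R}_S\times\mathcal{R}_E}\lvert\beta^*]$, apply a weighted Chernoff bound with $\mathcal{B}$ as the weight, optimize over the tilt, and swap inf and sup via a minimax equality on the compact convex region. Your choice to tilt in $(s,\omega_E)$ rather than $(s,o_E)$ is an equivalent affine reparametrization that merely shortcuts the paper's change-of-measure computation of the joint log-MGF, and your explicit verification via Prékopa's theorem that $\log\mathcal{B}$ is concave in $(s,o_E)$ — the hypothesis the minimax swap actually requires — supplies a detail the paper's proof leaves implicit.
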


A proof of the bound is done in Appendix \ref{A:1}. A heuristic\footnote{exact is possible if selection region were compact; we still can apply the approximation with a large enough compact subset of the selection region with almost mass $1$ under the prior} minimax argument together with smoothing of constraints by a barrier penalty yields a reduced analog of \eqref{barrier:version} for canonical selective queries in the paper. An approximating optimization with a barrier penalty on the active constraints denoted as $b_{\mathcal{R}_E}(.)$ can be written as
\begin{align}
\log \hat{\mathbb{P}}((S,O) \in \mathcal{R}\lvert \beta^*) &=-\inf\limits_{s \in \mathbb{R}^{d},\;o_E \in \mathbb{R}^{|E|}}\Big\{  \Lambda_f^*(s\lvert \beta^*) +  \Lambda_{g_E}^*(D_E s +P_E o_E+ q_E)\nonumber \\
 &\;\;\;\;\;\;\;\;\;\;\;\;\;\;\;\;\;\;\;\;\;\;\;\;\;\;\;\;\;\;\;\;-\log\mathcal{B}(o_E;s) + b_{\mathcal{R}_E}(o_E)\Big\}\label{approx:decomposed}
 \end{align}
with $\mathcal{B}(o_E; s)$ as defined in Theorem \ref{bound:modified}. 

Expression \eqref{approx:decomposed} yields an approximating optimization in $\mathbb{R}^{d+|E|}$ with a barrier function on the sign constraints of the active optimization variables in $\mathbb{R}^{|E|}$. We use the fact that the volume of the inactive selection region $\mathcal{B}(o_E;s)$ can be calculated exactly and easily as $p-|E|$ simple, univariate integrals over intervals $\mathcal{R}_{j,-E}\subset \real$. For example, for a centered Gaussian randomization with covariance matrix $\tau^2 I_p$ and the canonical cube constraints on the inactive subgradient variables $O_{-E}$ taking the form
$\mathcal{R}_{j,-E}=\{o_{j,-E} : |o_{j,-E}|\leq \lambda\},$
a closed form expression for the logarithm of the volume of the inactive cube region is
\[\log\mathcal{B}(o_E;s) = \sum\limits_{j=1}^{p-|E|} \log \left\{\Phi\left(\frac{\lambda+ \alpha(o_E; s)_j}{\tau}\right) - \Phi\left(\frac{-\lambda+ \alpha(o_E; s)_j}{\tau}\right)\right\}.\]
Here $\alpha(o_E; s)_j$ denotes the $j$-th coordinate of $\alpha(o_E; s) \in \mathbb{R}^{p-|E|}$, the Gaussian mean of $O_{j,-E}$ given $O_E = o_E, S=s$. Marginalizing over the inactive optimization variables results in a significant reduction in dimensions of optimization from the objective in \eqref{barrier:version}. Similar exact calculations of univariate probabilities of lying within an interval are easily available for other heavier tailed randomizations like the Laplace, Logistic etc. used in implementations in \cite{markovic2016bootstrap}.

\subsection{Dual problem: low dimensional regime}
\label{dual}
While solving the pseudo selective posterior using the above optimization as a surrogate to the normalizer is scalable for high dimensional problems, when $p\gg d+|E|$, it is not very ideal in the low dimensional regime with a large sample size, when $d+|E|\gg p$. Further, the optimization in \eqref{approx:decomposed} requires knowledge of the conjugates of the log-MGFs of the densities of the data and randomization. The dual problem yields an optimization objective in $\mathbb{R}^{p}$ and hence, renders a scalable version of the optimization in the low dimensional paradigm. The other distinction from the optimization posed in the primal form is that the dual is based on simply the log-MGFs corresponding to the distributions of data and randomization. In the low dimensional situation or when we do not have closed forms for the conjugates of the log-MGFs of the generative model, we can solve for the dual of the optimization problem instead.
\begin{theorem}
\label{dual:opt:appprox}
Denoting $\Lambda_f(.\lvert \beta^*) $ as the log-MGF of data generative density $f$ and $ \Lambda_g(.)$ as the log-MGF of randomization $\Omega$, the dual to the optimization approximating the selection probability $\log {\mathbb{P}}((S,O) \in \mathcal{R}\lvert \beta^*) $ in \eqref{barrier:version} is given by
\begin{equation}
\label{dual:barrier}
\begin{aligned}
\inf_{u \in \mathbb{R}^p} \Big\{ \Lambda_f(D^T u\lvert \beta^*)+ \Lambda_g(-u)+ b_{\mathcal{R}_O}^*(P^T u) + u^T q\Big\}
\end{aligned}
\end{equation}
where $b_{\mathcal{R}_O}^*$ is conjugate of the barrier function $b_{\mathcal{R}_O}(.)$ and $D,P, q$ are coefficients of linear terms of map \eqref{randomization:map}.
\end{theorem}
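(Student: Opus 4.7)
The plan is to derive \eqref{dual:barrier} as the Fenchel--Rockafellar dual of the primal minimization appearing inside \eqref{barrier:version}, leveraging the involutive conjugacy $(\Lambda_f^*)^* = \Lambda_f$ and $(\Lambda_g^*)^* = \Lambda_g$ that holds for the closed convex log-MGFs of data and randomization.

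First I would split the primal objective into a sum of three proper, closed, convex pieces coupled only through a linear map. Introducing an auxiliary variable $y \in \mathbb{R}^p$ to make the coupling explicit, the primal reads
\[
\inf_{s,\,o,\,y}\; \Lambda_f^*(s\lvert \beta^*) + b_{\mathcal{R}_O}(o) + \Lambda_g^*(y)\quad \text{subject to}\quad y = Ds + Po + q,
\]
to which I would attach a Lagrange multiplier $u \in \mathbb{R}^p$. Because the resulting Lagrangian is separable in $(s, o, y)$, its infimum factors into three independent inner minimizations plus the constant term $u^T q$.

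Next, I would evaluate each inner infimum using the Fenchel identity $\inf_w\{h(w) + v^T w\} = -h^*(-v)$, applied in conjunction with $(\Lambda_f^*)^* = \Lambda_f$ and $(\Lambda_g^*)^* = \Lambda_g$. This yields the dual function
\[
g(u) \;=\; u^T q \,-\, \Lambda_f(-D^T u\lvert\beta^*) \,-\, b_{\mathcal{R}_O}^*(-P^T u) \,-\, \Lambda_g(u).
\]
I would then invoke strong duality, justified by a Slater-type qualification: the log-MGFs $\Lambda_f, \Lambda_g$ are closed convex with effective domains containing neighborhoods of the origin for the Gaussian and other standard randomizations used in the paper, while the barrier $b_{\mathcal{R}_O}$ is proper and closed with relatively open effective domain equal to the interior of $\mathcal{R}_O$ (a tensor of orthants/cubes). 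Strong duality then gives $\inf_{s,o}\{\text{primal objective}\} = \sup_u g(u)$. A final change of variable $u \mapsto -u$ in $g$ recasts
\[
\sup_u g(u) \;=\; -\inf_u\bigl\{\Lambda_f(D^T u\lvert\beta^*) + \Lambda_g(-u) + b_{\mathcal{R}_O}^*(P^T u) + u^T q\bigr\},
\]
and negating both sides of the duality relation converts the expression $-\inf_{s,o}\{\cdot\}$ from \eqref{barrier:version} into precisely the dual form in \eqref{dual:barrier} as the quantity whose value equals the approximation to $\log\mathbb{P}((S,O)\in\mathcal{R}\lvert\beta^*)$.

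The main obstacle will be verifying the strong-duality qualification rigorously under the paper's framework, which amounts to ensuring that the affine map $(s,o)\mapsto(s,o,Ds+Po+q)$ intersects the relative interiors of $\mathrm{dom}\,\Lambda_f^* \times \mathrm{dom}\,b_{\mathcal{R}_O}\times\mathrm{dom}\,\Lambda_g^*$. For the Gaussian randomization with a smooth log-barrier this is immediate from nonemptiness of the interior of $\mathcal{R}_O$, but a clean statement of the general hypotheses that guarantee the exact identity in \eqref{dual:barrier} (rather than only a weak-duality inequality) is the technically most delicate step.
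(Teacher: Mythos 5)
Your proposal is correct and follows essentially the same route as the paper's proof: introduce the auxiliary variable $v=Ds+Po+q$, form the Lagrangian with multiplier $u\in\mathbb{R}^p$, and evaluate the separable inner infima via Fenchel conjugacy $(\Lambda_f^*)^*=\Lambda_f$, $(\Lambda_g^*)^*=\Lambda_g$ (the paper does this through the stationarity conditions $s=\grad\Lambda_f(D^Tu\lvert\beta^*)$, $v=\grad\Lambda_g(-u)$, which is the same computation), arriving at the identical dual after your sign flip $u\mapsto -u$. Your added attention to the strong-duality qualification is a point the paper passes over silently, but it does not change the argument.
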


See proof in the appendix \ref{A:1}. A point to note is that dual formulation of the approximating optimization involves computing the conjugate of the barrier penalty function on the optimization variables. Since the constraints on the active and inactive optimization problems are separable, this involves solving conjugates of $|E|$ and $p-|E|$ univariate functions that correspond to the active and inactive constraints respectively. That is the conjugate barrier takes the additive form
\[b_{\mathcal{R}_O}^*(P^T u) = b_{\mathcal{R}_E}^*(P_E^T u) + b_{\mathcal{R}_{-E}}^*(P_{-E}^T u).\]
Details of the computation of the conjugates of the barrier functions used in our implementations are given in Appendix \ref{dual:details}.

\begin{remark} 
The dual of the constrained Chernoff-based optimization in \eqref{chernoff:min} for the canonical constraint region 
\begin{equation*}
\begin{aligned}
\mathcal{R}_O &= \prod_{j=1}^{|E|} \mathcal{R}_{j,E} \times \prod_{j=1}^{p-|E|}\mathcal{R}_{j-E}\\
& = \prod_{j=1}^{|E|} \{o_{j,E}: \text{diag}(s_{j,E}) o_{j,E}>0\} \times  \prod_{j=1}^{p-|E|}\{o_{j,-E}: |o_{j,-E}|\leq \lambda\}
\end{aligned}
\end{equation*}
 is given by
\begin{equation}
\label{dual:chernoff:can}
\begin{aligned}
 \inf_{u \in \mathbb{R}^p:\text{diag}(s_E)P_E ^T u < 0} \Big\{  \Lambda_f(D^T u\lvert \beta^*)+ \Lambda_g(-u)+ \lambda\sum_{j=1}^{p-|E|} |P_{j,-E}^T u | + u^T q\Big\}
\end{aligned}
\end{equation}
where $P_{j,-E}^T$ denotes the $j$-th row of matrix transpose of $P_{-E}$. This is by observing that the convex conjugate of the characteristic function $\chi_{\mathcal{R}_{j,E}}(.)$ representing the sign constraints on the active optimization variables is
\[\chi_{\mathcal{R}_{j,E}}^*(P_{j,E} ^Tu)= \begin{cases} 
      0 &\text{ if } s_{j,E}P_{j,E}^T u <0 \\
      \infty & \text{ otherwise}.
   \end{cases}
\]
and that for the cube constraints $\chi_{\mathcal{R}_{j,-E}}(.)$ on the inactive subgradients is
\[ \chi_{\mathcal{R}_{j,-E}}^*(P_{j,-E}^T u)= \lambda |P_{j,-E}^T u|.\]
\end{remark}

\subsection{Marginalizing over multiple selections}
\label{multi:selection}

The optimization problem described above is aimed to approximate the selection probability of an event based on a single randomized data query of the form \eqref{randomization:map}. It is however, common practice to apply stages of screening or query the data base multiple times to arrive at a selected set. An example might be laboratory A performing an initial scan of thousands of potential predictors to select a pool that passes a suitably chosen thresholding criterion and laboratory B conducting another screening of predictors. The analyst is interested in combining both screening results to guide her to inference on the same data set that has been analyzed by the two laboratories.

The approximation presented in \eqref{barrier:version}, \eqref{approx:decomposed} and \eqref{dual:barrier} can be marginalized over multiple randomizations from multiple stages and hence, be extended to multi-stage selective algorithms. The next Lemma renders an approximation to the normalizer for a $K$-stage randomized selection with query in each stage corresponding to an inversion map
\begin{equation*}
\label{multi:randomization:decomp}
\omega_k = D_k s + P_k o_k + q_k \text{ for } k =1,2,..,K
\end{equation*}
with $o_k$ being the optimization variables for the randomized program in stage $k$. The selection region, determined by constraints on optimization variables $o_K$ at each stage, separable in the active and inactive coordinates as before, is given by
\begin{equation*}
\mathcal{R}_{(O_1,\cdots,O_k)} = \Pi_{i=1}^{K} \mathcal{R}_{O_k}.
\end{equation*}
Again, denote $\mathcal{R} = \mathcal{R}_S \times \mathcal{R}_{(O_1,\cdots,O_k)}$. Typically $\mathcal{R}_S = \real^d$, the unconstrained data-space augmented with the constrained region on optimization variables from each query.
\begin{lemma}
\label{approximate:prob:stagewise} Under $K$ randomizations with $\Omega_k \stackrel{\text{ind}}{\sim} g_k(.)$ for $k=1,2,...,K$ and with $ \Lambda_{g_k}^*(.)$ as the conjugate of the log-MGF $\Lambda_{g_k}$ of randomization $\Omega_k$, an upper bound to the logarithm of the exact selection probability $\log{\mathbb{P}}((S,O_1,\cdots, O_K) \in \mathcal{R}\lvert \beta^*)$ for a convex, compact $\mathcal{R}$
is given by
\begin{equation*}
\begin{aligned}
&- \inf\limits_{s \in \mathcal{R}_S, o_k \in\mathcal{R}_{O_k}, k=1,2,...,K}  \Big\{\Lambda_f^*(s\lvert \beta^*) +  \sum_{k=1}^{K} \Lambda_{g_k}^*(D_k s + P_k o_k + q_k)\Big\}.
\end{aligned}
\end{equation*}
\end{lemma}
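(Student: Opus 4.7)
The plan is to reduce the $K$-stage statement to the single-randomization bound already established in Theorem \ref{approximate:prob} by viewing the $K$ independent randomizations as one ``stacked'' randomization. Define $\widetilde{\Omega} = (\Omega_1, \ldots, \Omega_K) \in \mathbb{R}^{Kp}$ with density $\widetilde{g}(\omega_1, \ldots, \omega_K) = \prod_{k=1}^{K} g_k(\omega_k)$, stacked optimization variable $\widetilde{O} = (O_1, \ldots, O_K)$, and block matrices
\[
\widetilde{D} = \begin{pmatrix} D_1 \\ \vdots \\ D_K \end{pmatrix}, \qquad
\widetilde{P} = \begin{pmatrix} P_1 & & \\ & \ddots & \\ & & P_K \end{pmatrix}, \qquad
\widetilde{q} = \begin{pmatrix} q_1 \\ \vdots \\ q_K \end{pmatrix},
\]
so that the joint inversion relation $\widetilde{\omega}(s,\widetilde{o}) = \widetilde{D}s + \widetilde{P}\widetilde{o} + \widetilde{q}$ reproduces the canonical affine form in \eqref{randomization:map}. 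Since $\mathcal{R} = \mathcal{R}_S \times \prod_k \mathcal{R}_{O_k}$ is assumed convex and compact, so is the stacked selection region.

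Next I would exploit independence of the $\Omega_k$ to show the log-MGF of $\widetilde{\Omega}$ is additively separable, $\Lambda_{\widetilde{g}}(t_1,\ldots,t_K) = \sum_{k=1}^{K} \Lambda_{g_k}(t_k)$, and invoke the standard Fenchel fact that the convex conjugate of a separable sum is the sum of conjugates:
\[
\Lambda_{\widetilde{g}}^{*}(z_1,\ldots,z_K) \;=\; \sum_{k=1}^{K} \Lambda_{g_k}^{*}(z_k).
\]
This is the only place independence enters and it is what couples the multi-query setting cleanly to the single-query Chernoff calculation.

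Finally I would apply Theorem \ref{approximate:prob} to the stacked system, which yields
\[
\log \mathbb{P}\bigl((S,\widetilde{O}) \in \mathcal{R} \,\big|\, \beta^{*}\bigr) \;\le\; -\inf_{(s,\widetilde{o}) \in \mathcal{R}} \Bigl\{ \Lambda_f^{*}(s \,|\, \beta^{*}) + \Lambda_{\widetilde{g}}^{*}(\widetilde{D}s + \widetilde{P}\widetilde{o} + \widetilde{q}) \Bigr\}.
\]
Unpacking the block structure so that the $k$-th block of $\widetilde{D}s + \widetilde{P}\widetilde{o} + \widetilde{q}$ is $D_k s + P_k o_k + q_k$, and substituting the separable conjugate identity, recovers exactly the claimed upper bound, with the single infimum written as a joint infimum over $s \in \mathcal{R}_S$ and $o_k \in \mathcal{R}_{O_k}$ for $k = 1, \ldots, K$.

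The main obstacle is the caveat already noted after Theorem \ref{approximate:prob}: in realistic multi-stage selections the sets $\mathcal{R}_{O_k}$ are products of orthants and cubes and the product region is unbounded, so the compactness hypothesis of Theorem \ref{approximate:prob} is technically violated. The remedy is the same as in the single-stage setting: replace $\mathcal{R}$ by a sufficiently large convex and compact subset carrying essentially full mass under the prior and pass to a limit, leveraging monotonicity of the infimum on the right-hand side. No new ideas beyond those used to justify Theorem \ref{approximate:prob} are required.
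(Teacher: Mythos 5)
Your proposal is correct and follows essentially the same route as the paper: the paper redoes the Chernoff bound and minimax swap on the augmented vector $(S,O_1,\dots,O_K)$ and then computes its log-MGF "as in the proof of Theorem \ref{approximate:prob}," using independence to split it into $\Lambda_f + \sum_k \Lambda_{g_k}$, which is exactly what your stacked-system reduction with block matrices $\widetilde{D},\widetilde{P},\widetilde{q}$ and the separable-conjugate identity accomplishes. Your packaging as a direct invocation of Theorem \ref{approximate:prob} (valid since $\widetilde{P}$ is block-diagonal and invertible whenever each $P_k$ is) is a slightly cleaner modularization of the same argument, and your remark on the compactness caveat matches the paper's own discussion.
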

\begin{proof} 
The proof is easy to see as with independent randomizations in each stage of selection , we have $\log{\mathbb{P}}((S,O_1,\cdots, O_K) \in \mathcal{R}\lvert \beta^*)$ bounded from above by
\begin{align}
\Scale[0.93]{-\inf\limits_{s\in \mathcal{R}_S,o_k \in \mathcal{R}_{O_k},k=1,2,..,K}\Big\{\alpha^T s + \sum_{k=1}^K \alpha_k^T o_k\Big\} - \log\mathbb{E}(\exp(\alpha^T S + \sum_{k=1}^K\alpha_k^T O_k)\lvert \beta^*).} \nonumber 
\end{align}
An optimization over $\alpha \in \real$ and $\alpha_k \in \real, \{k=1,2,...,K\}$ and a minimax equality gives the bound
\[\Scale[0.95]{-\inf\limits_{s\in \mathcal{R}_S,o_k \in \mathcal{R}_{O_k},k=1,,..,K}\sup\limits_{\alpha, \alpha_k} \Big\{\alpha^T s + \sum_{k=1}^K \alpha_k^T o_k - \log\mathbb{E}(\exp(\alpha^T S + \sum_{k=1}^K\alpha_k^T O_k)\lvert \beta^*)\Big\}}.\]
A similar computation of the log-MGF of the augmented vector $(S,O_1,\cdots, O_K)$ as in the proof of Theorem \ref{approximate:prob} based on the change of variables facilitated by the inversion maps in \eqref{multi:randomization:decomp} completes the proof.
\end{proof}

The smooth analog of the constrained optimization in Lemma \ref{approximate:prob:stagewise} is given by
\begin{equation*}
- \inf\limits_{s \in \mathbb{R}^{d}, o_k \in\mathcal{R}_{O_k}, k=1,2,...,K}  \Big\{\Lambda_f^*(s\lvert \beta^*) +  \sum_{k=1}^{K} \Lambda_{g_k}^*(D_k s + P_k o_k + q_k) + b_{\mathcal{R}_{O_k}}(o_k)\Big\}.
\end{equation*}
The dual formulation of this approximation, optimizing over dual variables $u_k ;\; k=1,2,\cdots,K$ is given by
\begin{equation}
\label{dual:multiple}
\begin{aligned}
&\inf\limits_{u_1,.., u_K} \Lambda_f\left(\sum_{k=1}^K D_k^T u_k\right) +\sum_{k=1}^K\Big\{ \Lambda_{g}(-u_k) +b_{\mathcal{R}_{O_k}}^*(P_k^Tu_k) + u_k^T q_k\Big\}.
\end{aligned}
\end{equation}
\begin{remark}
\emph{Cost of optimization:} The optimization in Lemma \ref{approximate:prob:stagewise}, decomposed into active and separable inactive problems can be solved in its primal form in effectively $d+ \sum_{k=1}^{K}\cdot |E_k|$ dimensions, while the dual has an effective cost of solving a $Kp$ dimensional optimization, if the selected sizes are of smaller order than $p$. 
\end{remark}

\subsection{Sampler: Langevin random walk}
\label{sampler:langevin}

We describe below a Langevin random walk to sample from the pseudo posterior 
\[\tilde{\pi}_E(\beta^*\lvert S=s) \propto \pi(\beta^*) \cdot \cfrac{f(s\lvert \beta^*)}{\hat{\mathbb{P}}((S,O)\in \mathcal{R}\lvert \beta^*)}\]
post randomized queries based on the approximate normalizer in \eqref{barrier:version}, \eqref{approx:decomposed} and \eqref{dual:barrier}. The method of approximating a target distribution using a Langevin diffusion is studied in \cite{roberts1996exponential}. Another alternative to the simple Langevin sampler implemented in this work, is a Metropolis version with an accept reject step; the afore mentioned reference introduces the ``Metropolis adjusted" version of the algorithm. Depending on the regime of inference, we require the log-MGFs of the generative density and the randomization density for solving the approximating optimization in its dual form or the convex conjugates of the log-MGFs while solving for the primal. A new update ${\beta^*}^{(K)}$ based on a Langevin random walk with target as the pseudo selective posterior $\tilde\pi_E(\beta^*\lvert S)$ is given by
\begin{equation}
\label{new:update}
{\beta^*}^{(K)} = {\beta^*}^{(K-1)} + \eta \grad\log \tilde{\pi}_E\left({\beta^*}^{(K-1)}\lvert S\right) + \sqrt{2\eta}\epsilon^{(K)}
\end{equation}
where $\eta$ is the step-size and $\epsilon^{(K)} \sim \mathcal{N}(0, I)$. This allows us to provide sample-based effect size estimates in the form of credible intervals and point estimates for any function of the parameter of interest $\beta^*$ in the generative model. 

All that the sampler in \eqref{new:update} requires is calculating the gradient of the log-posterior $\tilde{\pi}_E$ as a function of each new draw ${\beta^*}^{(K)}$. 
For a Gaussian generative model on data vector $S$ with mean parametrized as $\mu(\beta^*)$, the below theorem shows that the gradient of the log-pseudo posterior can be computed in terms of the optimizer to the problem in \eqref{barrier:version}. 
\begin{theorem}
\label{gradient:posterior}
The gradient of the log-pseudo \textit{selective posterior} $\log\tilde{\pi}_E(.\lvert S=s)$ at ${\beta^*}^{(K)}$ for a Gaussian generative density for data vector $S$ with mean 
$\mu(\beta^*): \real^k \to \real^d$
 and a variance-covariance matrix $\Sigma_f$ given by
\[f(s\lvert \beta^*) = \frac{1}{(2\pi)^{d/2} |\Sigma_f|^{1/2}} \cdot \exp\left(-(s-\mu(\beta^*))^T\Sigma_f^{-1}(s-\mu(\beta^*))/2\right)\]
with respect to parameter $\beta^*$ is given by
\begin{equation}
\label{grad}
\cfrac{\partial\log\pi(\beta^*)}{\partial \beta^*}\Big\lvert_{{\beta^*}^{(K)}}+ \left(\cfrac{\partial\mu}{\partial \beta^*}\right)^T\Bigg\lvert_{{\beta^*}^{(K)}} \cdot\Sigma_f^{-1}\left\{s-s^*\left(\Sigma_f^{-1}\mu\left({\beta^*}^{(K)}\right)\right) \right\}
\end{equation}
where $s^*(\Sigma_f^{-1}\mu(\beta^*))$ equals
\[\arg\min_{z\in \mathbb{R}^d}\left( z^T\Sigma_f^{-1}\mu(\beta^*) -\frac{1}{2}z^T \Sigma_f^{-1}\ z - \inf_{o\in \mathbb{R}^p}\left\{ \Lambda_g^*(Dz + Po +q) + b_{\mathcal{R}_O}(o)\right\}\right ).\]
\end{theorem}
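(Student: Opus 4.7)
The plan is to decompose the log pseudo posterior into three additive pieces
\[
\log \tilde{\pi}_E(\beta^*\lvert s) = \log \pi(\beta^*) + \log f(s\lvert\beta^*) - \log \hat{\mathbb{P}}((S,O)\in\mathcal{R}\lvert\beta^*) + \text{const},
\]
and differentiate each piece with respect to $\beta^*$. The prior gradient $\nabla\log\pi(\beta^*)$ is immediate. For the likelihood, I use the identity $\log f(s\lvert\beta^*) = -\Lambda_f^*(s\lvert\beta^*) + C$ with $\Lambda_f^*(s\lvert\beta^*) = \tfrac{1}{2}(s-\mu(\beta^*))^T\Sigma_f^{-1}(s-\mu(\beta^*))$ for the Gaussian family; a direct chain rule through $\mu(\beta^*)$ yields
\[
\grad_{\beta^*}\log f(s\lvert\beta^*) = \left(\cfrac{\partial\mu}{\partial\beta^*}\right)^T \Sigma_f^{-1}\bigl(s - \mu(\beta^*)\bigr).
\]

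For the normalizer term, I would invoke the envelope theorem (Danskin). Writing $V(\beta^*) := \inf_{s',o}\{\Lambda_f^*(s'\lvert\beta^*) + \Lambda_g^*(Ds' + Po + q) + b_{\mathcal{R}_O}(o)\}$ so that $\log\hat{\mathbb{P}} = -V$, I note that $\beta^*$ enters the objective only through $\Lambda_f^*(\cdot\lvert\beta^*)$, and that piece depends on $\beta^*$ only through the mean $\mu(\beta^*)$. Strict convexity of the objective in $s'$ (since $\Sigma_f^{-1}\succ 0$) together with convexity in $o$ and coercivity guarantees a unique primal optimizer $(s^*(\beta^*), o^*(\beta^*))$, so Danskin applies. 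The envelope identity then gives
\[
\grad_{\beta^*}V(\beta^*) = \grad_{\beta^*}\Lambda_f^*(s'\lvert\beta^*)\Big\lvert_{s' = s^*(\beta^*)} = -\left(\cfrac{\partial\mu}{\partial\beta^*}\right)^T \Sigma_f^{-1}\bigl(s^*(\beta^*) - \mu(\beta^*)\bigr).
\]
Substituting $-\grad_{\beta^*}\log\hat{\mathbb{P}} = \grad_{\beta^*}V$ and adding it to the likelihood contribution collapses the two $\mu(\beta^*)$ terms, leaving the advertised $(\partial\mu/\partial\beta^*)^T\Sigma_f^{-1}(s - s^*)$ correction.

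It remains to recognize that the $s^*(\beta^*)$ produced by the envelope calculation is precisely the $z$ appearing in the theorem statement. Profiling out the optimization variable $o$, the $s$-marginal problem becomes
\[
\min_{z} \Big\{\tfrac{1}{2}z^T\Sigma_f^{-1}z - z^T\Sigma_f^{-1}\mu(\beta^*) + \inf_o\bigl\{\Lambda_g^*(Dz + Po + q) + b_{\mathcal{R}_O}(o)\bigr\}\Big\},
\]
after dropping the $\beta^*$-only constant $\tfrac{1}{2}\mu^T\Sigma_f^{-1}\mu$. This depends on $\beta^*$ only through the natural parameter $\Sigma_f^{-1}\mu(\beta^*)$, which is why the paper writes $s^*$ as a function of that parameter; rearranging signs recovers the displayed expression.

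\paragraph{Main obstacle.}
The delicate step is the envelope theorem application. One must verify the regularity conditions (joint convexity, continuity of the objective in $\beta^*$ through a smooth map $\mu(\cdot)$, and uniqueness of the primal optimizer) so that differentiating through the infimum is legitimate, and must also argue that the barrier penalty $b_{\mathcal{R}_O}$ and the randomization conjugate $\Lambda_g^*$ keep the objective coercive so the argmin is attained. Once these analytic properties are in hand, the remainder of the proof is bookkeeping via the chain rule.
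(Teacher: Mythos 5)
Your proposal is correct and follows essentially the same route as the paper: term-by-term differentiation of $\log\pi + \log f - \log\hat{\mathbb{P}}$, with the normalizer handled by an envelope argument that reduces $\grad_{\beta^*}$ of the infimum to the partial derivative of $\Lambda_f^*(\cdot\lvert\beta^*)$ evaluated at the optimizer $s^*$. The only cosmetic difference is that the paper packages this envelope step as the conjugate-gradient identity $\grad\delta^*(\Sigma_f^{-1}\mu) = (\grad\delta)^{-1}(\Sigma_f^{-1}\mu) = s^*(\Sigma_f^{-1}\mu)$ after reparametrizing through the natural parameter $\Sigma_f^{-1}\mu(\beta^*)$, whereas you invoke Danskin directly; these are the same fact, and your observation about the sign convention in the displayed $\arg\min$ is consistent with the paper's usage.
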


For the dual optimization in \eqref{dual:barrier} , the optimizer $s^*$ can be derived from the K.K.T. conditions as
\[s^* = \grad \Lambda_f(D^T u^*\lvert \beta^*)\]
where $u^*$ is the dual variable that optimizes
\[\arg\min_{u \in \mathbb{R}^p} \Big\{ \Lambda_f(D^T u\lvert \beta^*)+ \Lambda_g(-u)+ b_{\mathcal{R}_O}^*(P^T u) + u^T q\Big\}\]

This shows that all we need for inference is the solution to the optimization problem cast as \eqref{barrier:version}, \eqref{approx:decomposed} and \eqref{dual:barrier} at each fresh draw.
See Appendix \ref{A:1} for a proof of Theorem \ref{gradient:posterior}. 

\begin{remark}
\emph{\it Estimating equation for MAP:}
It is easy to see that equating \eqref{grad} to $0$ gives rise to an estimating equation for the selective MAP for $\beta^*$. It gives rise to a convex objective for the MAP problem for any log-concave prior $\pi$ on $\beta^*$ and a generative mean $\mu(.)$ that is linear in $\beta^*$. Lemma \ref{selective:MLE} gives the selective MLE under a non-informative prior $\pi\propto 1$ for a Gaussian density with natural parameter as $\beta^*$. A standard gradient descent can be performed on the log-posterior to solve for the MAP in such cases. The pseudo selective MAP in the non-randomized scenario and the simple additive randomized settings is introduced in \cite{selective_bayesian}. 
\end{remark}

\begin{lemma}
\label{selective:MLE}
Under a Gaussian generative density for data vector $S$ considered in \ref{gradient:posterior} with mean parametrized as $\mu(\beta^*)= \beta^*$ and $\Sigma_f= I$, the approximate selective MLE ${\beta^*}^{\text{MLE}}$ based on the pseudo truncated law $\tilde\ell_E(\cdot\lvert \beta^*)$ given by
\begin{equation*}
\begin{aligned}
\log \tilde\ell_E(s\lvert \beta^*) &\propto -s^T s/2 + {\beta^*}^T s-\Gamma(\beta^*) \text{ and }
\end{aligned}
\end{equation*}
\begin{equation*}
\Gamma(\beta^*) = \sup_{z\in \real^d}\Big\{ z^T\beta^*  -\frac{1}{2}z^T z -\inf_{o\in \mathbb{R}^p}\left\{ \Lambda_g^*(Dz + Po +q) + b_{\mathcal{R}_O}(o)\right\}\Big\}
\end{equation*}
satisfies 
\begin{equation}
\grad \Gamma\left({\beta^*}^{\text{MLE}}\right) = s. 
\end{equation}
\end{lemma}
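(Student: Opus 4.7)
The plan is to observe that the pseudo truncated log-likelihood $\log \tilde\ell_E(s\lvert \beta^*) = -s^Ts/2 + {\beta^*}^T s - \Gamma(\beta^*)$ has exactly the structure of an exponential family in the natural parameter $\beta^*$, with sufficient statistic $s$ and cumulant/log-partition function $\Gamma(\beta^*)$. The approximate selective MLE maximizes this log-likelihood over $\beta^*$, and under a flat prior this is the same as the MAP. So first I would simply differentiate with respect to $\beta^*$, giving
\begin{equation*}
\nabla_{\beta^*} \log\tilde\ell_E(s\lvert \beta^*) \;=\; s - \nabla \Gamma(\beta^*),
\end{equation*}
and write down the first-order optimality condition $\nabla \Gamma({\beta^*}^{\mathrm{MLE}}) = s$, which is exactly the claim.

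The one technical point is justifying that $\Gamma$ is differentiable so that the first-order condition makes sense. Writing $H(z) = \tfrac{1}{2}z^T z + \inf_{o\in \mathbb{R}^p}\{\Lambda_g^*(Dz+Po+q) + b_{\mathcal{R}_O}(o)\}$, the function $\Gamma$ is precisely the convex conjugate $H^*$, since $\Gamma(\beta^*)=\sup_z\{z^T\beta^*-H(z)\}$. The inner objective in the supremum is strictly concave in $z$ because of the quadratic $-\tfrac12 z^Tz$ (the partial minimization over $o$ of the convex function $\Lambda_g^*(Dz+Po+q) + b_{\mathcal{R}_O}(o)$ is convex in $z$, so its negation does not destroy strict concavity of the overall objective). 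Hence the optimizer $z^*(\beta^*)$ is unique and by the envelope theorem for convex conjugates, $\Gamma$ is differentiable with $\nabla \Gamma(\beta^*) = z^*(\beta^*)$. Concavity of $\log\tilde\ell_E$ in $\beta^*$ (immediate since $\Gamma$ is convex as a conjugate) then upgrades the stationary point to a global maximizer.

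As a sanity check, I would verify consistency with Theorem \ref{gradient:posterior}: specializing that gradient formula to $\pi\propto 1$, $\mu(\beta^*)=\beta^*$, and $\Sigma_f=I$ reduces the gradient of the log-pseudo-posterior at ${\beta^*}^{(K)}$ to $s - s^*({\beta^*}^{(K)})$; setting this to zero recovers exactly $\nabla \Gamma({\beta^*}^{\mathrm{MLE}}) = s$ via the identification $s^*(\beta^*) = \nabla\Gamma(\beta^*)$. There is no substantive obstacle here --- the whole content of the lemma is the exponential-family interpretation of the pseudo-likelihood together with the envelope identity for a smooth convex conjugate, and the mildly subtle point is just confirming differentiability of $\Gamma$ from strict concavity of the inner problem.
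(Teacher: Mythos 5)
Your proposal is correct and follows essentially the same route as the paper, which simply notes that the lemma is the estimating equation \eqref{grad} specialized to $\pi\propto 1$, $\mu(\beta^*)=\beta^*$, $\Sigma_f=I$, i.e.\ the first-order condition $s=\grad\Gamma({\beta^*}^{\text{MLE}})$ from the exponential-family form of the pseudo-likelihood. Your added justification of differentiability of $\Gamma$ via strict convexity of the inner objective and the conjugate/envelope identity is a welcome tightening of a step the paper leaves implicit, but it is not a different argument.
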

The proof of this is straight-forward from the estimating equation in \eqref{grad}.
In the following section, we show that the approximate normalizer in Theorems \ref{approximate:prob} and \ref{bound:modified} give a valid exponential rate to the selection probability on a large deviation scale under a Gaussian randomization and a Gaussian generative density. Under these conditions, the selective MLE obtained by maximizing the pseudo truncated law in Lemma \ref{selective:MLE} is consistent for $\beta^*$.

\section{Limiting approximation on large deviation scale}
\label{LDP}
We fix some notations that apply to this section. In the implementations in Section \ref{experiments}, the columns of the predictor matrix $X$ are normalized by a factor of $1/\sqrt{n}$. We introduce the suppressed scale and denote the normalized $X$ as $X/\sqrt{n}$ in this section. The optimization for such a query is described in details in \ref{lasso:variants} under Section \ref{sampler}. The set-up is similar to the randomized logistic lasso query considered in \cite{randomized_response}, except that we use the Lasso query instead of the logistic Lasso query to illustrate the results in this section. With data
\[y_i, X_i\sim \mathbb{P}_n(\beta_{E,n})\text{  for } i=1,2,...,n\]
where $\mathbb{P}_n\in\{\mathbb{F}_n: \mathbb{E}_{\mathbb{F}_n}[y_i\lvert X_i] = X_{i,E} \beta_{E,n},\mathbb{E}_{ \mathbb{F}_n}[y_i ^2\lvert x_i]-\mathbb{E}^2_{\mathbb{F}_n}[y_i\lvert X_i]=1 \},$
the Lasso query is given by
\begin{equation*}
\label{lasso:randomized:program:scaled}
\argmin_{\beta\in \real^p} \frac{1}{2}\|y - {X\beta}/{\sqrt{n}}\|_2^2 -\omega^T \beta + \lambda\|\beta\|_1 +\frac{\epsilon}{2}\|\beta\|_2^2.
\end{equation*}
The tuning parameter is set at a theoretical value
$\lambda_n = \mathbb{E}\left[{X^T\psi}/{\sqrt{n}}\right] $
for $\Psi \in \real^p \sim \mathcal{N}(0, I_p)$ and $\epsilon =1/\sqrt{n}.$ 
Denote the scaled versions of the data vector and optimization variables as $S_n$ where
\[S_n =\sqrt{n} \begin{pmatrix}   \left({X_E^T X_E}/{n}\right)^{-1} \cfrac{X_E^T y}{n} \\ \cfrac{X_{-E}^T}{n}\left(y-   X_E\left({X_E^T X_E}/{n}\right)^{-1} \cfrac{X_{E}^T y}{n}\right)\end{pmatrix} = \sqrt{n} \bar{S}_n\]
with $\bar{S}_n$ as the mean of data variables $S_{i,n}(y_i, X_i), \;i=1,2, \cdots, n$ such that
\begin{equation}
\label{mean}
\mathbb{E}_{\mathbb{P}_n}[S_{i,n}(y_i, X_i)\lvert \beta_E] = \mu_n(\beta_{E,n})= \begin{pmatrix}  \beta_{E,n} \\ \mathbb{E}_{\mathbb{P}_n}\left[X_{i, -E}^T (y_i- X_{i, E}\beta_{E,n}) \right] \end{pmatrix}.
\end{equation}
Unlike \cite{randomized_response} which assumes local alternatives of the form $\beta_{E,n}= o(n^{-1/2})$, the selection probability is on the scale of a large deviation probability if $\beta_{E,n} = O(1)$. To simplify notations, we denote $\beta_{E,n}= \beta_E$ hereafter. 

We assume that the randomization instance $\omega$ in \eqref{lasso:randomized:program:scaled} is from a Gaussian density, which is used in all the experiments in Section \ref{experiments}.
The infinite divisibility property of Gaussian densities allows us to write perturbation $\omega = \sqrt{n}\bar{\omega}_n$ where
$\bar{\omega}_n$ is the mean of $n$ i.i.d. Gaussian variables $\omega_{i}, i=1,2,\cdots, n$. The tuning parameter $\lambda_n$ converges to a constant; thus, we can treat it as a constant and denote it as $\lambda$. Also, noting that $X^T X/n$ converges in probability to a constant, we can consider the matrices $D_n, P_n, q_n$ in the inversion map for \eqref{lasso:randomized:program:scaled} as fixed. We use notations $D, P, q$ for the affine inversion map. Finally, let
\begin{equation}
\label{opt:variables}
\sqrt{n}\bar{O}_n = P^{-1}(\sqrt{n}\bar{\omega}_n - \sqrt{n}D\bar{S}_n -q)
\end{equation}
based on the inversion map, with $\bar{O}_n$ interpreted as the mean of
\[O_{i,n} =  P^{-1}(\omega_{i} - D\bar{S}_n-q/\sqrt{n}), \;i=1,2,\cdots, n.\]

Theorem \ref{LDP:thm} gives the limiting rate of decay of the volume of a compact and convex selection region
$\mathcal{R} = \mathcal{R}_S^{'} \times \mathcal{R}_O^{'} \text{ for } \mathcal{R}_S^{'} \subset \real^p \text{ and } \mathcal{R}_O^{'} \subset \real^p$
with respect to the probability density of the augmented vector $(\bar{S}_n, \bar{O}_n)$, whenever the data vector mean satisfies a large deviation principle. Define
\begin{equation}
\label{limit:MGF}
\Lambda_f(\lambda\lvert \beta_E)= \lim_n \frac{1}{n}\Lambda_{\mathbb{P}_n}(n\lambda) = \lim_n \frac{1}{n}\log\mathbb{E}_{\mathbb{P}_n}[\exp(n\lambda^T\bar{S}_n )\lvert \beta_E]
\end{equation}
with $\bar{S}_n$ as the mean of the data vector array $S_{i,n},\;i=1,2,\cdots n$ satisfying \eqref{mean} for $\beta_E\in \real^{|E|}$ and 
\[\mathcal{D} = \{\lambda \in \real^p: \Lambda_f(\lambda\lvert \beta_E)<\infty\}.\]

\begin{theorem}
\label{LDP:thm}
Whenever the limiting log-MGF sequence $\Lambda_f(\lambda\lvert \beta_E)<\infty$ in a neighborhood around $0$ in $\real^p$, $\Lambda_f(\lambda\lvert \beta_E)$ is lower semi-continuous and differentiable in $\mathcal{D}^{0}$ and for any $\lambda \in \partial\mathcal{D}$, $\lim_{\gamma\to \lambda}|\grad \Lambda_f(\nu\lvert \beta_E)|=\infty$,
 the following hold for a compact and convex selection region 
$\mathcal{R} =\mathcal{R}_S^{'} \times \mathcal{R}_O^{'}.$ 
\begin{enumerate}[(1).]
\item Denoting the log-MGF of Gaussian randomization $\omega_{1}$ as $\Lambda_g(.)$ with conjugate $\Lambda_g^*(.)$ and the conjugate corresponding to $\Lambda_f(.\lvert \beta_E)$ in \eqref{limit:MGF} as $\Lambda_f^*(.\lvert \beta_E)$
\begin{equation*}
\begin{aligned}
&\lim\limits_{n} \dfrac{1}{n}\log \mathbb{P}(\bar{S}_n  \in \mathcal{R}_S^{'}, \bar{O}_n \in \mathcal{R}_O^{'}\lvert \beta_E) +\inf_{s\in \mathcal{R}_S^{'}, o \in \mathcal{R}_O^{'}} \{\Lambda_f^*(s\lvert \beta_E) \\
&\;\;\;\;\;\;\;\;\;\;\;\;\;\;\;\;\;\;\;\;\;\;\;\;\;\;\;\;\;\;\;\;\;\;\;\;\;\;\;\;\;\;\;\;\;\;\;\;\;\;\;\;\;\;\;\;\;\;\;\;\;\;\;+ \Lambda_g^*(Ds + Po +q/\sqrt{n}) \}=0.
\end{aligned}
\end{equation*}
\item If the Gaussian randomization density supported on $\real^p$ is independent in all $p$ coordinates with the conjugate of the log-MGF corresponding to active coordinates denoted as $\Lambda_{g_E}^*$ and the selective constraints on the optimization variables are separable as in Theorem \ref{bound:modified}, then 
\begin{equation*}
\begin{aligned}
&\lim_n \frac{1}{n}\log \mathbb{P}(\bar{S}_n  \in \mathcal{R}_S^{'}, \bar{O}_n \in \mathcal{R}_O^{'}\lvert \beta_E) + \inf_{s\in \mathcal{R}_S^{'}, o_E\in \mathcal{R}_E^{'}} \Big\{\Lambda_f^*(s\lvert \beta_E) + \\
&\;\;\;\;\;\;\;\;\;\;\;\;\;\;\;\;\;\;\;\;\;\;\;\;\;\;\;\;\;\;\;\;\;\;\;\;\Lambda_{g_E}^*(D_E s + P_E o_E +q_E/\sqrt{n})-\frac{1}{n}\log \mathcal{B}(o_E;s)\Big\} =0 
\end{aligned}
\end{equation*}
with
\[\Scale[0.98]{\mathcal{B}(o_E;s) =  \prod\limits_{j=1}^{p-|E|}\int\limits_{\mathcal{R}^{'}_{j,-E}}g_{j,-E}(o_{j,-E} +D_{j,-E}s + P_{j,-E} o_{E}+ q_{j,-E}/\sqrt{n})do_{j,-E}.}\]
\end{enumerate}
\end{theorem}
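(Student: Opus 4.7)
The plan is to derive both statements as direct consequences of the Gärtner--Ellis large deviation theorem applied to $(\bar{S}_n,\bar{O}_n)$, viewed as the image of $(\bar{S}_n,\bar{\omega}_n)$ under the affine inversion map. The hypotheses imposed on $\Lambda_f(\cdot\lvert\beta_E)$---finiteness near the origin, lower semi-continuity, essential smoothness (differentiability on $\mathcal{D}^0$ together with steepness at $\partial\mathcal{D}$)---are precisely the conditions of Gärtner--Ellis, which therefore yields an LDP for $\bar{S}_n$ with rate $\Lambda_f^*(\cdot\lvert\beta_E)$. On the randomization side, the infinite divisibility of the Gaussian means $\bar{\omega}_n$ is the empirical mean of i.i.d.\ Gaussians, so Cramér's theorem gives a Gaussian LDP with rate $\Lambda_g^*$. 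Independence of $S$ and $\omega$ upgrades this to a joint LDP on $(\bar{S}_n,\bar{\omega}_n)$ with product rate $\Lambda_f^*(s\lvert\beta_E)+\Lambda_g^*(w)$.

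For part (1), express the augmented vector through the inversion map as $(\bar{S}_n,\bar{O}_n)=T_n(\bar{S}_n,\bar{\omega}_n)$ with the affine bijection $T_n(s,w)=(s,P^{-1}(w-Ds-q/\sqrt{n}))$. The contraction principle (applied to the continuous, continuously invertible $T_n$) transports the joint LDP to one on the image with rate
\begin{equation*}
I_n(s,o)=\Lambda_f^*(s\lvert\beta_E)+\Lambda_g^*(Ds+Po+q/\sqrt{n}).
\end{equation*}
Because $\mathcal{R}_S'\times\mathcal{R}_O'$ is compact and convex and $I_n$ is lower semi-continuous with the essential-smoothness regularity transported from $\Lambda_f$ and $\Lambda_g$, the LDP upper and lower bounds coincide on $\mathcal{R}_S'\times\mathcal{R}_O'$ (infimum over the interior equals infimum over the closure), giving $-n^{-1}\log\mathbb{P}(\cdot)\to\inf_{(s,o)\in\mathcal{R}_S'\times\mathcal{R}_O'}I_n(s,o)$, which rearranges to the claim.

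For part (2), exploit coordinatewise Gaussian independence to integrate the inactive block exactly \emph{before} passing to the limit. Under the block decomposition \eqref{decomp:rand}, $\bar{\omega}_{-E,n}$ is independent of $\bar{\omega}_{E,n}$, and conditional on $(\bar{S}_n,\bar{O}_{E,n})$ the inactive block $\bar{O}_{-E,n}$ is Gaussian with a density that factors coordinatewise. Integrating yields
\begin{equation*}
\mathbb{P}(\bar{S}_n\in\mathcal{R}_S',\bar{O}_n\in\mathcal{R}_O'\lvert\beta_E)=\int_{\mathcal{R}_S'\times\mathcal{R}_E'}\mathcal{B}(o_E;s)\,dP_{(\bar{S}_n,\bar{O}_{E,n})}(s,o_E),
\end{equation*}
with $\mathcal{B}(o_E;s)$ as stated. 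Apply the part (1) argument to $(\bar{S}_n,\bar{O}_{E,n})$ to obtain an LDP with rate $\Lambda_f^*(s\lvert\beta_E)+\Lambda_{g_E}^*(D_E s+P_E o_E+q_E/\sqrt{n})$, and then invoke Varadhan's lemma with the bounded, continuous weight $\mathcal{B}(o_E;s)$ to absorb $-n^{-1}\log\mathcal{B}$ into the rate.

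The main obstacle is the Varadhan step in part (2): it requires that $\mathcal{B}(o_E;s)$ be continuous and uniformly bounded away from zero on the compact set $\mathcal{R}_S'\times\mathcal{R}_E'$, so that $n^{-1}\log\mathcal{B}$ is a uniformly $o(1)$ perturbation that does not flip the matching upper and lower LDP inequalities. Continuity is clear from the Gaussian integral representation; strict positivity on the compact domain follows because each factor is a Gaussian interval probability with a finite, continuously varying argument. A secondary subtlety is the $n$-dependent affine shift $q/\sqrt{n}$, which must be carried through the contraction and infimum uniformly; continuity of $\Lambda_g^*$ and $\Lambda_{g_E}^*$ together with compactness of $\mathcal{R}$ makes this a routine uniform-convergence check rather than a substantive difficulty.
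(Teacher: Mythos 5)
Your part (1) is a correct alternative to the paper's route. The paper never forms the joint LDP of $(\bar{S}_n,\bar{O}_n)$; instead it conditions on $\bar{S}_n=s$, observes that $H_n(s)=\frac{1}{n}\log\mathbb{P}(\bar{O}_n\in\mathcal{R}_O'\lvert\bar{S}_n=s)$ converges uniformly on the compact region to $-\inf_{o\in\mathcal{R}_O'}\Lambda_g^*(Ds+Po)$, and then invokes a modified Varadhan lemma (Lemma \ref{varadhan:lemma}) for functionals $\mathbb{E}[\exp(nH_n(\bar{S}_n))1_{\bar{S}_n\in\mathcal{R}_S'}]$ with a uniformly convergent \emph{sequence} $H_n$. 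Your G\"artner--Ellis plus product-LDP plus contraction argument reaches the same rate for part (1), modulo the $n$-dependent shift and the interior-versus-closure matching that you flag (a step the paper also leaves at the same level of rigor). The paper's conditioning device is not just a stylistic choice, however: it is what makes part (2) go through, where the inactive block has been integrated out and no fixed joint rate function is available.

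The genuine gap is in part (2). You treat $\mathcal{B}(o_E;s)$ as a continuous weight that is uniformly bounded away from zero on $\mathcal{R}_S'\times\mathcal{R}_E'$, so that $n^{-1}\log\mathcal{B}$ is a uniformly $o(1)$ perturbation that can be "absorbed" by ordinary Varadhan. In the asymptotic regime of Section \ref{LDP} this is false, and if it were true the theorem would assert something incorrect: $\mathcal{B}(o_E;s)$ is the conditional probability that the empirical mean $\bar{O}_{-E,n}$ of $n$ terms (conditional variance of order $1/n$) lies in the fixed region $\mathcal{R}_{-E}'$, so it lives on the large-deviation scale $e^{nG^s(o_E)}$ with $G^s(o_E)=-\inf_{o_{-E}\in\mathcal{R}_{-E}'}\Lambda_{g_{-E}}^*(\cdot)$ generally nonzero. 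That is exactly why the statement keeps $-\frac{1}{n}\log\mathcal{B}(o_E;s)$ inside the infimum rather than discarding it, and why the paper proves a Varadhan-type lemma for the sequence $G_n^s(o_E)=\frac{1}{n}\log\mathcal{B}(o_E;s)$ converging uniformly to a continuous concave limit (concavity coming from log-concavity of the Gaussian convolved with an indicator), applied first in $o_E$ conditionally on $s$ and then again in $s$. Under your reading, the inactive constraints would contribute nothing to the exponential rate in part (2), contradicting part (1), where the full $\Lambda_g^*(Ds+Po+q/\sqrt{n})$ — inactive coordinates included — enters the rate. To repair your argument you would need to (i) establish uniform convergence of $\frac{1}{n}\log\mathcal{B}(\cdot;\cdot)$ to a continuous limit on the compact active region, and (ii) replace the ordinary Varadhan step by one tolerating an $n$-dependent exponent, i.e., essentially reprove Lemma \ref{varadhan:lemma}.
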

We use in the above theorem the fact that the mean vector $\bar{S}_n$ satisfies a large deviation principle with rate function $\Lambda_f^*$ 
\[\lim_n \frac{1}{n}\log\mathbb{P}(\bar{S}_n \in \mathcal{R}^{'}_S\lvert \beta_E) = -\inf_{s\in \mathcal{R}^{'}_S} \Lambda_f^*(s).\]
Similarly, the conditional probability of $\bar{O}_n$ given $\bar{S}_n$ has a limiting large deviation rate expressed in terms of $\Lambda_g^*(h_n(.))$ composed with the affine inversion map $h_n(.): \real^p \to \real^p$ given by
\[h_n(o) = D\bar{S}_n + Po + q/\sqrt{n}.\] 
That is,
\begin{equation}
\label{LDP:rate:optimization} 
\lim_n \cfrac{1}{n}\log\mathbb{P}(\bar{O}_n \in \mathcal{R}^{'}_O\lvert \bar{S}_n=s) + \inf_{o\in \mathcal{R}_O^{'}} \Lambda_g^*(Ds + Po + q/\sqrt{n})=0.
\end{equation}

This is a consequence of the observation that the limiting rate function is the conjugate of $\lim_n \frac{1}{n} \log\mathbb{E}(\exp(n\lambda^T \bar{O}_n)\lvert \bar{S}_n=s)$; the change of measure map yields the following
\[\lim_n \frac{1}{n} \log\mathbb{E}(\exp(n\lambda^T \bar{O}_n)\lvert \bar{S}_n=s) + \lambda^T P^{-1}(Ds + q/\sqrt{n})-\frac{\lambda^TP^{-1}\Sigma_g{P^{-1}}^{T} \lambda}{2}=0\]
for Gaussian randomization with variance $\Sigma_g$. The proof then follows by an application of the below Lemma \ref{varadhan:lemma}, a modified version of Varadhan's Lemma (see \cite{dembo1998large}). The smooth unconstrained optimizations in \eqref{barrier:version} and \eqref{approx:decomposed} with a continuous barrier penalty function, scaled appropriately also approximate the selection probability accurately as the sample size grows large. Proofs of the above theorem and Lemma \ref{varadhan:lemma} are included in the Appendix \ref{A:3}. 
\begin{lemma}
\label{varadhan:lemma}
For a sequence of functions $H_n(.)$ that uniformly converge to a continuous function $H$ on a compact, convex set $\mathcal{R}\subset \real^d$, the  limit 
\[\lim_n \frac{1}{n}\log\mathbb{E}[\exp(n H_n(\bar{Z}_n))1_{\bar{Z}_n \in \mathcal{R}}] = -\inf_{z\in \mathcal{R}}\{ \Lambda^*(z) - H(z)\}\]
holds for sequence of variables $\bar{Z}_n\in \real^d$ satisfying a large deviation principle with a rate function $\Lambda^*(.)$.
\end{lemma}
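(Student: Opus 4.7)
The plan is to reduce the claim to the standard Varadhan integral lemma in two moves: first use uniform convergence of $H_n \to H$ on $\mathcal{R}$ to replace $H_n$ by $H$ in the integrand, and then apply the classical Varadhan argument for a continuous $H$ on the compact set $\mathcal{R}$.

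For the first move, fix $\epsilon > 0$. Uniform convergence supplies $N = N(\epsilon)$ such that $|H_n(z) - H(z)| \leq \epsilon$ for all $n \geq N$ and all $z \in \mathcal{R}$. Factoring this bound out of the integrand on the event $\{\bar{Z}_n \in \mathcal{R}\}$,
\[
e^{-n\epsilon}\mathbb{E}\!\left[e^{nH(\bar{Z}_n)}1_{\bar{Z}_n \in \mathcal{R}}\right] \leq \mathbb{E}\!\left[e^{nH_n(\bar{Z}_n)}1_{\bar{Z}_n \in \mathcal{R}}\right] \leq e^{n\epsilon}\mathbb{E}\!\left[e^{nH(\bar{Z}_n)}1_{\bar{Z}_n \in \mathcal{R}}\right].
\]
Taking $n^{-1}\log$, sending $n \to \infty$, and then $\epsilon \downarrow 0$ sandwiches the limit with $H_n$ between the same two values and reduces the problem to proving the identity with $H_n$ replaced by $H$.

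For the second move, I would establish matching upper and lower bounds equal to $\sup_{z\in\mathcal{R}}\{H(z) - \Lambda^*(z)\} = -\inf_{z\in\mathcal{R}}\{\Lambda^*(z) - H(z)\}$. The upper bound exploits uniform continuity of $H$ on compact $\mathcal{R}$ to pick, for each $\delta > 0$, a finite closed cover $\mathcal{R} = \bigcup_{i=1}^{m}K_i$ on which $H$ varies by less than $\delta$; bounding the expectation by $\sum_i e^{n\sup_{K_i}H}\mathbb{P}(\bar{Z}_n \in K_i)$ and applying the LDP upper bound on each closed $K_i$ yields $\limsup_n n^{-1}\log \leq \sup_{z\in\mathcal{R}}\{H(z) - \Lambda^*(z)\} + \delta$. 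The lower bound invokes compactness of $\mathcal{R}$ and lower semicontinuity of $\Lambda^*$ to pick a near-maximizer $z^*$ of $H - \Lambda^*$ over $\mathcal{R}$, together with a small open ball $B \subset \mathcal{R}$ around $z^*$ on which $H \geq H(z^*) - \delta$, and applies the LDP lower bound to the open set $B$.

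The main obstacle is the lower bound when the near-maximizer $z^*$ falls on $\partial \mathcal{R}$ or on the relative boundary of $\mathrm{dom}(\Lambda^*)$, where no open ball around $z^*$ fits inside $\mathcal{R}$ and $\Lambda^*$ need not be continuous. Convexity of $\mathcal{R}$ is the key enabler: pick an interior point $z_0 \in \mathrm{int}(\mathcal{R}) \cap \mathrm{int}(\mathrm{dom}(\Lambda^*))$ and use the segment $z^*_t = (1-t)z^* + tz_0 \in \mathrm{int}(\mathcal{R})$ for $t \in (0,1]$; convexity gives $\Lambda^*(z^*_t) \leq (1-t)\Lambda^*(z^*) + t\Lambda^*(z_0)$, while lower semicontinuity yields $\Lambda^*(z^*_t) \to \Lambda^*(z^*)$, and continuity of $H$ gives $H(z^*_t) \to H(z^*)$. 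Applying the LDP lower bound on a small open ball around $z^*_t$ and then sending $t \downarrow 0$ and $\delta \downarrow 0$ closes the sandwich. In the paper's Gaussian setting, the rate function $\Lambda^*$ is the convex conjugate of a smooth essentially-smooth log-MGF and is therefore continuous on its effective domain, so this adjustment is straightforward.
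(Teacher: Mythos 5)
Your argument is correct, but it proves the Laplace principle on $\mathcal{R}$ from scratch rather than following the paper's route. After the common first step (the uniform-convergence sandwich that replaces $H_n$ by $H$), the paper treats the classical Varadhan lemma for \emph{bounded continuous} integrands as a black box and manufactures the restriction to $\mathcal{R}$ through auxiliary penalties: for the lower bound it subtracts $\Psi(y)=C\min(d(x,y)/\delta,1)$ with $C$ chosen large enough that the contribution from outside $\mathcal{B}(x,\delta)$ is negligible, and for the upper bound it subtracts $\phi_j=j\min(d(z,\mathcal{R}),1)\uparrow\chi_{\mathcal{R}}$ and then proves $\liminf_j\inf_z\{\Lambda^*-H+\phi_j\}\geq\inf_{z\in\mathcal{R}}\{\Lambda^*-H\}$ by a subsequence/contradiction argument. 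You instead run the standard direct proof: a finite cover by sets of small $H$-oscillation plus the LDP upper bound for the $\limsup$, and the LDP lower bound on a small ball around a near-maximizer for the $\liminf$. Your route is more elementary and self-contained (no appeal to Varadhan's theorem for unbounded-above integrands or to the exchange-of-limits lemma), while the paper's buys brevity by leaning on a citable result. Notably, your handling of a near-maximizer on $\partial\mathcal{R}$ via the segment $z_t^*=(1-t)z^*+tz_0$ and convexity of $\Lambda^*$ supplies exactly the justification for the identity $\inf_{z\in\mathcal{R}^0}\{\Lambda^*(z)-H(z)\}=\inf_{z\in\mathcal{R}}\{\Lambda^*(z)-H(z)\}$ that the paper asserts without proof in its last display of the lower bound; the only caveat, shared with the paper, is that both arguments implicitly require $\mathcal{R}$ to have nonempty interior meeting $\mathrm{dom}(\Lambda^*)$, which holds in the Gaussian settings where the lemma is applied.
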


Under a Gaussian generative density parametrized by $\beta_E$, a Gaussian randomization with log-MGF $\Lambda_g(.)$ and a compact and convex selection region 
$\mathcal{R} =\mathcal{R}_S^{'} \times \mathcal{R}_O^{'}$ and the same asymptotic set-up as in Theorem \ref{LDP:thm}, it follows as a consequence that the sequence 
\[\Gamma_n(\beta_E) = \sup_{s\in \mathcal{R}_S^{'}} \left\{s\beta_E -s^T s/2-\inf_{o \in \mathcal{R}_O^{'}}\left\{ \Lambda_g^*(Ds + Po +q/\sqrt{n}) + b_{\mathcal{R}_O}(o))\right\} \right\}\]
in  Lemma \ref{selective:MLE} approximates the exact log-partition function 
\[\Gamma_{\text{exact},n} (\beta_E) = \beta_E^T\beta_E/2 + \frac{1}{n}\log  \mathbb{P}(\bar{S}_n  \in \mathcal{R}_S^{'}, \bar{O}_n \in \mathcal{R}_O^{'}\lvert \beta_E) \text{ as } \]
$$\Gamma_n(\beta_E) -\Gamma_{\text{exact},n} (\beta_E) \to 0, \;n \to \infty.$$
Denote the sequence of selective MLE obtained by maximizing the sequence of pseudo truncated likelihoods as $\beta^\text{MLE}_{n, E}$ that satisfies the estimating equation
\[\grad \Gamma\left(\beta^\text{MLE}_{n, E}\right) = \bar{S}_n. \]
Strong convexity of $\Gamma_n(\cdot)$ with a lower bound $B$ on the indices of convexity leads to the identity
\[\|\beta^\text{MLE}_{n, E} -\beta_E\|_2 \leq \dfrac{1}{B}\cdot  \|\bar{S}_n -\grad \Gamma_n(\beta_E)\|_2.\]
Convergence of the approximate log-partition sequence $\Gamma_n(\cdot)$ to the exact one coupled with the identity above prove consistency of the selective MLE $\beta^\text{MLE}_{n, E}$ using similar arguments as Theorem 7.6 in \cite{selective_bayesian}. 

\section{Illustrations of truncated Bayesian approach}
\label{sampler}

We illustrate truncated Bayesian approach by revisiting some popular selective queries. These examples are discussed in the context of frequentist inference in \cite{harris2016selective}. In all the below examples, the generative law on the data vector is a Gaussian with mean parametrized as $\beta^*$. In particular, we assume that 
$\mathbb{E}_f[Y\lvert X] = X^*\beta^*.$ We assume that the columns of the design matrix $X$ are scaled by $1/\sqrt{n}$ and denote the scaled predictor matrix as $X$, suppressing the scale $n$. In particular, we assume independent Gaussian entries for $X$. The randomized queries are conducted using instances of randomization from a Gaussian density supported on $\mathbb{R}^p$ with mean $0$ and variance $\tau^2 I_p$.
\begin{remark}
\emph{\it Prior information on parameters:}
We provide inferential results based on the selected model in Section \ref{experiments} where $X^* = X_E$ and $\beta^* =\beta_E \in \real^{|E|}$ under a non-informative prior. Our methods however, do allow the analyst to elicit a prior from an expert or prior experiments post the selective analysis. Our simulation results show that in the absence of an informative prior, the analyst can still capitalize upon the merits of a Bayesian machinery to provide valid inference post selection. 
\end{remark}
The generic recipe for inference using the proposed methods is to compute the inversions maps and selection regions that characterize the output of a query. This is followed by solving the optimization problem for each draw $\beta^{(K)}$ of the sampler. A function of the optimal data vector gives the gradient of the approximate log-posterior at $\beta^{(K)}$ in Theorem \ref{gradient:posterior}; thus, we sample from a tractable version of the selective posterior to carry out Bayesian inference. 
For each of the below examples, we give an explicit approximating optimization based on the inversion map and selective constraints that characterize the randomized query. 

We present below the canonical Lasso query with the $\ell_1$-penalty; we show simulations using both the primal and dual optimizations and a carved version of the Lasso query in Section \ref{experiments}. We describe the optimizations for forward stepwise in \ref{FS} and the thresholding query that is a screening stage of a multi-query in \ref{ms:lasso}, these are examples of popular queries with penalties different from an $\ell_1$ penalty.. Other natural extensions of the Bayesian approach include the grouped selection of variables with a group Lasso penalty as in \cite{loftus2015selective}. We can also apply our methods to other interesting examples such as inference post selection of edges representative of the conditional dependence structure of variables via the graphical Lasso; frequentist selective inference in such a model has been addressed in \cite{penalized_l1}. We do not explore these extensions here.
 
\subsection{A Lasso query}
\label{lasso:variants} 
A randomized version of Lasso with design $X$ based on data vector $S= y \in \mathbb{R}^n$ and randomization instance $\omega \sim \mathcal{N}(0,\tau^2 I_p)$ solves 
\begin{equation*}
\label{lasso:randomized:program}
\argmin\displaystyle_{\beta} \frac{1}{2}\|y - X\beta\|_2^2 -\omega^T \beta + \lambda\|\beta\|_1 +\frac{\epsilon}{2}\|\beta\|_2^2;
\end{equation*}
to give output $(E,z_E)$, the active set with active signs. 

The selection region imposed by the $\ell_1$-constrained query takes the completely separable form of orthants for active constraints and intervals for inactive constraints; that is
$\mathcal{R}= \mathbb{R}^d \times \mathcal{R}_O$ 
where
\[\mathcal{R}_O=\prod_{j=1}^{|E|} \{o_{j,E}: \text{sign}(o_{j,E}) = z_{j,E}\} \times \prod_{j=1}^{p-|E|}\{o_{j,-E}: |o_{j,-E}|\leq \lambda\}.\]

\noindent{\textbf{\textit{Inversion map}}}: The inversion map encoding selection output $(E, z_E)$ is given by
\begin{equation}
\label{map:lasso}
\begin{aligned}
\omega(y; o) &= -\begin{pmatrix}[1.2]X_E^T \\ X_{-E}^T\end{pmatrix}y + \begin{bmatrix}[1.2]X_E^TX_E +\epsilon I  & 0\\ X_{-E}^TX_E & I\end{bmatrix}o + \begin{pmatrix} \lambda z_E \\ 0\end{pmatrix}= Dy + Po+ q \nonumber \\
&=\Scale[0.95]{ \begin{pmatrix}[1.2]-X_E^T y + (X_E^TX_E +\epsilon I )o_E + \lambda z_E \\  X_{-E}^T y +  X_{-E}^TX_E o_{E} + o_{-E}\end{pmatrix} =  \begin{pmatrix}[1.2] D_E y + P_E o_E + q_E \\ D_{-E}y + P_{-E}o_E + q_{-E} + o_{-E}\end{pmatrix} .}
\end{aligned}
\end{equation}

Based on the above inversion map, the approximating optimization with $n+|E|$ and $p$ optimizing variables in the primal and dual formulation respectively can be computed as below. 
\medskip

\noindent{\textbf{\textit{Primal problem}}}: Under the linear model with mean $X^*\beta^*$ and covariance matrix $\sigma^2 I_n$, the approximation to $\log \hat{\mathbb{P}}((S,O) \in \mathcal{R}\lvert \beta^*)$ leading to pseudo posterior $\tilde{\pi}_E(\beta^*\lvert y)$ is 
\begin{equation*}
\begin{aligned}
&-\inf\limits_{s \in \mathbb{R}^{n},\;o_E \in \mathbb{R}^{|E|}}\Big\{  \frac{1}{2\sigma^2} \|s-X^*\beta^*\|_2^2 +\frac{1}{2\tau^2}\|D_E s +P_E o_E+ q_E\|_2^2\\
&\;\;\;\;\;\;\;\;\;\;\;\;\;\;\;\;\;\;\;\;\;\;\;\;\;\;\;\;\;\;-\log\mathcal{B}(o_E;s) + b_{\mathcal{R}_E}(o_E)\Big\}
\end{aligned}
\end{equation*} 
where the volume of the inactive selection region conditional on $S=s, O_E= o_E$ under the isotropic Gaussian randomization is computed as 
\[\log\mathcal{B}(o_E;s) = \sum\limits_{j=1}^{p-|E|} \log \left\{\Phi\left(\frac{\lambda+ \alpha(o_E; s)_j}{\tau}\right) - \Phi\left(\frac{-\lambda+ \alpha(o_E; s)_j}{\tau}\right)\right\}\]
with $\alpha(o_E; s) = D_{-E}s + P_{-E} o_{E}+ q_{-E}$
and $ D_{-E}, P_{-E}, q_{-E}$ as given in \eqref{map:lasso}.
\medskip

\noindent{\textbf{\textit{Dual problem}}}: The corresponding dual in terms of the logarithm of the Gaussian MGFs (both for data and randomization) and conjugate of the barrier function is given by
\[\inf_{u \in \mathbb{R}^p} \frac{1}{2}\sigma^2 \|D^T u\|_2^2 + \frac{1}{2}\tau^2 \|u\|_2^2 + b^*_{\mathcal{R}_O}(P^T u) + u^T (D X^*\beta^* +q),\]
where $D, P, q$ are obtained from the map in \eqref{map:lasso}.
Calculation of the conjugate of the barrier function follows from Appendix \ref{dual:details}.
A carved query solves the randomized version of lasso described in \eqref{lasso:randomized:program} on a random split of the data $S(X^{(1)}, y^{(1)})$ leading to the output $(E, z_E)$. Such a query takes the form 
\[ \argmin_{\beta} \frac{1}{2r}\|y^{(1)} - X^{(1)}\beta\|_2^2 + \lambda\|\beta\|_1 +\frac{\epsilon}{2}\|\beta\|_2^2\]
\medskip
where $r$ is the fraction of the data used in the above selective query. 

\noindent{\textbf{\textit{Inversion map}}}:
The randomization inherited from the random split on the data, as described in \cite{markovic2016bootstrap} leads to the below inversion map
\[\omega(s,o) = \partial \ell(s; (\hat\beta_E,0))-\frac{1}{r}\partial \ell(s^{(1)}; (\hat\beta_E,0)).\]
The randomization described as above
is asymptotically Gaussian with mean $0$ and covariance $\Sigma_g$ and is asymptotically independent of the data vector $S$ described for the random $X$ lasso query.
\medskip

\noindent{\textbf{\textit{Primal problem}}}: While we can no longer use the reduced optimization in \eqref{approx:decomposed} (as the randomization inherited from the split is not independent in all p-coordinates), we can use the more general approximation to the normalizer in \eqref{barrier:version}. The joint on the data and randomization is an asymptotic Gaussian, with the data mean parametrized as $\mu(\beta^*)$. The approximating optimization that we solve to sample from the pseudo \textit{selective posterior} is given by
\begin{equation*}
\begin{aligned}
&\Scale[0.96]{-\inf\limits_{s \in \mathbb{R}^{p},\;o \in \mathbb{R}^{p}}\Big\{  \frac{1}{2}(s-\mu(\beta^*))^T\Sigma_f^{-1}(s-\mu(\beta^*)) +\frac{1}{2}(D s +Po+ q)^T \Sigma_g^{-1}
(D s +P o+ q)}\\
&\;\;\;\;\;\;\;\;\;\;\;\;\;\;\;\;\;\;\;\;\;\;\;\;\;\;\;\;\;\;\;+ b_{\mathcal{R}_O}(o)\Big\}
\end{aligned}
\end{equation*} 
with $\Sigma_g$ and $\Sigma_f$ estimated by bootstrap.

\subsection{A forward stepwise query}
\label{FS}
We discuss the approximating optimization that we solve to give truncated Bayesian inference after 2 steps of forward stepwise selection (FS) next. This can be easily generalized to $K$ steps. In Section \ref{experiments}, we give adjusted estimates in a Bayesian model after $1$ step of FS. This can also be viewed as a sequential query on the data. 
\medskip

\noindent{\textbf{\textit{Inversion maps}}}: 
Denoting $E_1=\{j_1\}$ and $E_2=\{j_1,j_2\}$ and the predictor for second stage as $\tilde{X} = \mathcal{P}_{j_1}^{\perp}X_{-j_1}$ (adjusted for selection of $j_1$ in the first step), 
the characterizing inversion maps for the two-stage sequential selection procedure are as below.
\begin{equation*}
\omega_1=\begin{pmatrix}[1.2] -X_{j_1}^T s + o_{j_1}  \\ -X_{-j_1}^T s +  o_{-j_1}\end{pmatrix};\omega_2=\begin{pmatrix}[1.2] -\tilde{X}_{j_2}^T s + o_{j_2}  \\ -\tilde{X}_{-j_2}^T s +  o_{-\{j_1,j_2\}}\end{pmatrix}.
\end{equation*}
giving selection regions
\[{\mathcal{R}_{O_1} = \{(o_{j_1}, o_{-j_1})\in \mathbb{R}^p: z_{j_1} o_{j_1} \geq 0, \|o_{-j_1}\|_{\infty} \leq |o_{j_1}|\},}\]
\[{\mathcal{R}_{O_2} = \{(o_{j_2}, o_{-\{j_1,j_2\}})\in \mathbb{R}^{p-1}: z_{j_2} o_{j_2} \geq 0, \|o_{-\{j_1,j_2\}}\|_{\infty} \leq |o_{j_2}|\}}\]
where $z_{j_1}$ and $z_{j_2}$ represent the signs of the active variables entering the model in the first and second steps respectively.
\begin{remark}
\emph{\it Selection regions in FS:} As we can see from above that the selection regions in this example take the form of a cone rather than the usual orthant and cube yielded by the $\ell_1$ penalty in the variants of Lasso. We can still write the selection region as
$\mathcal{R}_{O_1}= \mathcal{R}_{E_1} \times \prod_{j\neq j_1}\mathcal{R}_{j,-E_1}$
where \[\mathcal{R}_{j,-E_1} =\{ o_{j,-j_1} : |o_{j,-j_1}| \leq |o_{j_1}|\} \text{ for } j=\{1,2,...,p\} \setminus E_1.\] That is the inactive selective constraints are all separable in the $p-1$ coordinates, although they are determined by the active optimization variable unlike the example of Lasso. The probability of the inactive optimization variables being constrained to be smaller in magnitude than $|o_{j_1}|$ can be computed exactly as $\mathcal{B}( o_{j_1}; s)$ conditional on data $s$ and active variable $ o_{j_1}$. A similar computation goes through for more than $1$ step of FS.
\end{remark}

\noindent{\textbf{\textit{Primal problem}}:}  Denoting the separable inactive selection regions $\mathcal{R}_{j,-E_1} =\{ o_{j,-j_1} : |o_{j,-j_1}| \leq |o_{j_1}|\}$ for $j=\{1,2,...,p\} \setminus E_1$ and 
$\mathcal{R}_{j,-E_2} =\{ o_{j,-\{j_1,j_2\}} : |o_{j,-\{j_1,j_2\}}| \leq |o_{j_2}|\}$ for $j=\{1,2,...,p\} \setminus E_2$, solve $2p-3$ univariate Gaussian probabilities as $\mathcal{B}( o_{j_1}; s)$ and $\mathcal{B}( o_{j_2}; s)$ where
\begin{equation*}
\label{cube:problem:ms}
\begin{aligned}
& \log\mathcal{B}( o_{j_1}; s) =\sum\limits_{j\neq j_1}\Bigg\{\log\Phi\left(({o_{j_1}-X_{j,-j_1}^T s})/{\tau}\right)-\log\Phi\left(({-o_{j_1}-X_{j,-j_1}^T s})/{\tau}\right)\Bigg\}
 \end{aligned}
\end{equation*}
\begin{equation*}
\label{cube:problem:lasso}
\begin{aligned}
& \log\mathcal{B}( o_{j_2}; s) =\sum\limits_{j\neq j_1, j_2}\Bigg\{\log\Phi\left((o_{j_2}-\tilde{X}_{j,-j_2}^T s)/{\tau}\right)- \log\Phi\left((-o_{j_2}-\tilde{X}_{j,-j_2}^T s)/{\tau}\right)
\Bigg\}.
\end{aligned}
\end{equation*}
Solve an optimization over $(s,o_{j_1}, o_{j_2})$ where $s\in \mathbb{R}^n, o_{j_1} \in \mathbb{R}, o_{j_2} \in \mathbb{R}$ with sign barriers $b_{\mathcal{R}_{j_1}}(.)$ and $b_{\mathcal{R}_{j_2}}(.)$ on $o_{j_1}$ and $o_{j_2}$, that is:
\begin{equation*}
\label{primal:problem:ms:lasso}
\begin{aligned}
&\inf_{s,o_{j_1},o_{j_2}}\Big\{\dfrac{\|s- X^*\beta^*\|^2}{2\sigma^2} + \dfrac{\|-X_{j_1}^T s + o_{j_1}\|^2}{2\tau^2} -\log\mathcal{B}( o_{j_1}; s)  + b_{\mathcal{R}_{E_1}}(o_{j_1})\\
&\;\;\;\;\;\;\;\;\;\;\;\;\;\;\;\; +\dfrac{\|\tilde{X}_{j_2}^T s + o_{j_2}\|^2}{2\tau^2} - \log\mathcal{B}( o_{j_2}; s)   + b_{\mathcal{R}_{E_2}}(o_{j_2})\Big\}.
 \end{aligned}
\end{equation*}
\noindent{\textbf{\textit{Dual problem}}:}
Solve a dual optimization over $u_1 \in  \mathbb{R}^{p}$ and $u_2 \in \mathbb{R}^{p-1}$, as stated below:
\begin{equation*}
\label{dual:problem:fs:2step}
\begin{aligned}
&{\inf\limits_{u_{1},u_{2}}\Big\{ \frac{1}{2}\sigma^2{\|X u_1 + \tilde{X} u_2\|^2}-(X u_1 + \tilde{X} u_2)^T X^*\beta^*}\\
&{\;\;\;\;\;\;\;\;\;\;\;\;+ \frac{1}{2}\tau^2 u_1^2+ \frac{1}{2}\tau^2 u_2^2+ b_{\mathcal{R}_{O_1}}^*(u_1)+ b_{\mathcal{R}_{O_2}}^*( u_2)\Big\}}.
\end{aligned}
\end{equation*}

\subsection{A 2-stage query: thresholding followed by Lasso} 
\label{ms:lasso}
We present an example of a two-stage screening method in the linear regression setting with a fixed design matrix $X$ with normalized columns; we derive the approximating optimization problem to provide inference in a Bayesian model with prior $\pi$ on $\beta^*$ and $Y\lvert \beta^*\sim \mathcal{N}(X^*\beta^*, \sigma^2 I)$. The selective analysis comprises of two stages of screening based on realizations $\omega_1, \omega_2$ from independent Gaussian distributions, each with all i.i.d. mean $0$ components and variance $\tau^2 I$. The first query is a randomized marginal screening across the $Z$-statistics at a nominal threshold vector $\alpha$, that solves
\begin{equation*}
\label{ms:randomized:program}
\min_{\beta} \frac{1}{2}\|\beta - X^T y/\hat{\sigma}\|_2^2 -\omega^T \beta + I^{\alpha}_{\ell_{\infty}}(\beta); \text{ with }  
\end{equation*}
\[ I^{\alpha}_{\ell_{\infty}}(\beta)=\begin{cases} 
     0 & \|\beta\|_{\infty}\leq \alpha \\
      \infty & \text{otherwise.}
   \end{cases}\]
This results in output $(E_1,z_{E_1})$, the active set of marginally most correlated predictors with active signs from Stage-I screening. \\
Denoting $\tilde{X}= X_{E_1}\in \mathbb{R}^{n} \times  \mathbb{R}^{E_1}$, the predictor matrix with selected predictors from the first round of screening, the second query is a randomized lasso query that solves \eqref{lasso:randomized:program} with design matrix $\tilde{X}$
to yield active set $E_2$ with signs $z_{E_2}$. 

The first step describes the inversion maps and selective constraints encoding the two selective queries where data vector $S=Y$. 
\medskip

\noindent{\textbf{\textit{ Inversion maps}}}:
\begin{equation*}
\label{rand:map:ms}
\text{Map I :-- } \omega_1 = \begin{pmatrix} \alpha z_{E_1}-  X_E^T s/\hat\sigma+ o_{E_1}\\ -X_{-E}^T s/\hat\sigma + o_{-E_1}\end{pmatrix} 
 \end{equation*}
 \begin{equation*}
\label{rand:map:ms}
{\text{Map II :-- } \omega_2=\begin{pmatrix}[1.2]-\tilde{X}_{E_2}^T s + (\tilde{X}_{E_2}^T \tilde{X}_{E_2} + \epsilon I)o_{E_2} + \lambda z_{E_2}  \\ -\tilde{X}_{-E_2}^T s + \tilde{X}_{-E_2}^T \tilde{X}_{E_2} o_{E_2} + o_{-E_2}
\end{pmatrix}}
 \end{equation*}
inducing respective selection regions \[\mathcal{R}_{O_i}= \mathcal{R}_{E_i} \times \Pi_j\mathcal{R}_{j,-E_i}; \;i=1,2\] as 
\begin{equation*}
{\mathcal{R}_{O_1} = \prod_{j=1}^{|E_1|}\{ \text{sign}(o_{j, E_1})= z_{j, E_1}\}\times \prod_{j=1}^{p-|E_1|} \{| o_{j,-E_1}|\leq \alpha\}}
\end{equation*} 
\begin{equation*}
\mathcal{R}_{O_2} =  \prod_{j=1}^{|E_2|}\{\text{sign}(o_{j,E_2})= z_{j, E_2}\} \times \prod_{j=1}^{|E_1|-|E_2|} \{| o_{j,-E_2}|\leq \lambda\}.
\end{equation*}  

Using the facts that the convex conjugate of a Gaussian log-MGF with mean $\mu$ and variance $\gamma^2 I_k$ at vector $x$ is 
$\|x-\mu\|_2^2/2\gamma^2$ and the log-MGF is 
$\mu^T x+  \gamma^2\|x\|^2/2$
we derive the primal and dual optimization problems to sample from the approximate posterior. 
\medskip

\noindent{\textbf{\textit{Primal problem}}}: The primal marginalizes over the inactive sub-gradients followed by the optimization over active variables and data in $n +|E_1|+ |E_2|$ dimensions. Computing the exact log-Gaussian probabilities over intervals $[-\alpha, \alpha]$ and $[-\lambda, \lambda] $ as 
\begin{equation*}
\begin{aligned}
\log \mathcal{B}(o_{E_1}; s) &= \sum_{j=1}^{p-|E_1|} \log \Big\{\Phi\left({\{\alpha-X_{j,-E_1}^T s/\hat{\sigma} + o_{j, E_1}\}}/{\tau}\right) \\
&\;\;\;\;\;\;\;\;\;\;- \Phi\left({\{-\alpha -X_{j,-E_1}^T s/\hat{\sigma} + o_{j, E_1}\}}/{\tau}\right)\Big\}
\end{aligned}
\end{equation*}
\begin{equation*}
\begin{aligned}
\log \mathcal{B}(o_{E_2}; s) &= \sum_{j=1}^{|E_1|-|E_2|} \log \Big\{\Phi\left({\{\lambda + \tilde{X}_{j,-E_2}^T \tilde{X}_{E_2} o_{E_2} -\tilde{X}_{j,-E_2}^T s\}}/{\tau}\right) \\
&\;\;\;\;\;\;\;\;\;\;\;\;\;\;\;\;\;- \Phi\left({\{-\lambda+  \tilde{X}_{j,-E_2}^T \tilde{X}_{E_2} o_{E_2} -\tilde{X}_{j,-E_2}^T s\}}/{\tau}\right)\Big\}
\end{aligned}
\end{equation*}
we have the optimization in the primal form as
\begin{equation*}
\label{primal:problem:ms:lasso}
\begin{aligned}
&-\inf\limits_{s,o_{E_1},o_{E_2}}\Bigg\{\dfrac{\|s- X^*\beta^*\|^2}{2\sigma^2} + \dfrac{\|\alpha z_{E_1} -X_{E_1}^T s/\hat{\sigma} + o_{E_1}\|^2}{2\tau^2}-\log \mathcal{B}( o_{E_1}; s)\\
&\;\;+ b_{\mathcal{R}_{E_1}}(o_{E_1}) + \dfrac{\|\tilde{X}_{E_2}^T \tilde{X}_{E_2} o_{E_2} -\tilde{X}_{E_2}^T s +\lambda z_{E_2}\|^2}{2\tau^2}-\log \mathcal{B}( o_{E_2}; s)+ b_{\mathcal{R}_{E_2}}(o_{E_2})\Big\}.
 \end{aligned}
\end{equation*}

\noindent{\textbf{\textit{Dual problem}}}:
With $P_1$ and $P_2$ identified respectively as \[I_p \text{ and } \begin{bmatrix}[1.2] \tilde{X}_{E_2}^T \tilde{X}_{E_2} + \epsilon I & 0 \\ \tilde{X}_{-E_2}^T \tilde{X}_{E_2} & I\end{bmatrix}\]
from the randomization maps, solve an optimization over $u_1 \in  \mathbb{R}^{p}$ and $u_2 \in \mathbb{R}^{|E_1|}$ as below:
\begin{equation*}
\label{dual:problem:ms:lasso}
\begin{aligned}
& {\inf\limits_{u_{1},u_{2}}\Big\{ \frac{1}{2}\sigma^2{\| X^Tu_1/\hat{\sigma} + \tilde{X}^T u_2\|^2}-(X^Tu_1/\hat{\sigma} + \tilde{X}^T u_2)^T X^*\beta^*}+u_1^T \begin{pmatrix} \lambda  z_{E_2} \\ 0\end{pmatrix} \\
&\;\;\;\;\;\;\;\;\;\;\; + u_2^T \begin{pmatrix} \alpha z_{E_1} \\ 0\end{pmatrix} + \frac{1}{2}\tau^2 u_1^2+ \frac{1}{2}\tau^2 u_2^2+ b^*_{\mathcal{R}_{O_1}}(P_{1}^T u_1)+ b^*_{\mathcal{R}_{O_2}}(P_{2}^T u_2)\Big\}.
\end{aligned}
\end{equation*}

\section{Experiments}
\label{experiments}

\subsection{Simulated models}
\label{simulated:models}
We conduct different experiments to show the coverage and risk properties of estimates, obtained using our methods in comparison to those based on untruncated approach. We vary generative models across our experiments: this highlights that our methods show good performance even under misspecified models. We gives estimates under commonly used selective queries with different losses and penalties.

In the first experiment, we use Model I in Section \ref{motivation} to generate our data. The ground truth is the null model $Y\sim \mathcal{N}(0,I)$. For a fixed design $X$, we draw $Y\in \real^n$ for every repetition using the same $X$. For a random design $X$, we randomly draw $X\in \mathbb{R}^{n \times p}$ with Gaussian entries and draw $Y$ conditional on $X$ and the underlying parameter $\beta$ in each repetition of the experiment. The columns of design $X$ are scaled by $1/\sqrt{n}$ in all cases. The second experiment uses Model II in Section \ref{motivation} as a generative mechanism. This is a Bayesian model with ground truth $\Theta_E(\beta) = (X_E^T X_E)^{-1} X_E^T X \beta$, determined by $E$ in each trial.

In table \ref{table:expt1:I}, we compare the empirical coverage of the credible intervals, the risk of the posterior mean and the length of intervals between the approximating method that aims at the pseudo selective posterior and the usual Bayesian posterior inference. For the Bayesian model, we report the empirical (Bayesian) FCR, Bayes risk of the posterior mean and the length of intervals in table \ref{table:expt1:II}. The queries as described in Examples under \ref{lasso:variants} are conducted under centered Gaussian randomization with variance $\tau^2 I_p$; with the exception of the carved query which inherits randomization from a randomly chosen split of the data. For inference, we use the selected model $\mathcal{N}(X_E\beta_E, I)$ and a non-informative prior on $\beta_E$, where $E$ is the active set from the selective query. In both experiments, we use a misspecified likelihood and prior. Note that despite the fact that the model for inference is a mis-specified one under the true generative models, our methods display superiority in terms of coverage and risk properties in comparison to the unadjusted estimates. The first column states the query- the Lasso with a fixed and random design, a carved Lasso with a random design and 1 step of forward stepwise (FS) and the last column gives the regression dimensions $n$ and $p$. 
\begin{center}
\small
\vspace{-1.0em}
\captionof{table}{Expt 1- Model I}
\label{table:expt1:I}
\bgroup
\def\arraystretch{1.}
\scalebox{0.8}{\begin{tabular}{ |c | c c |c c |c c| c|}
\hline
  & \multicolumn{2}{c}{Coverage} & \multicolumn{2}{c}{Risk} & \multicolumn{2}{c|}{Lengths} & \\
\hline
\bf{Query}  & adjusted & unadjusted  & adjusted  & unadjusted  &  adjusted  & unadjusted & $n,p$\\ [0.5ex] 
 \hline\hline
 \thead{Lasso \\
 (Fixed $X$) Primal} & $86.20\%$  & $22.72\%$ & $1.85$ & $5.43$ & $4.55$ & $3.32$ &\thead{$200$\\$1000$} \\ [0.5ex] 
 \hline
  \thead{ Lasso \\ (Fixed $X$) Dual} & $89.70\%$  & $51.38\%$ & $1.81$ & $3.38$ & $4.41$ & $3.31$ & \thead{$1000$\\ $200$} \\ [0.5ex] 
 \hline
  \thead{Lasso\\  (Random $X$)} & $85.42\%$  & $43.44\%$ & $1.87$ & $3.74$ & $4.41$ & $3.31$  & \thead{$1000$\\ $200$}\\ [0.5ex] 
 \hline
\thead{Carved  Lasso \\ (Random $X$) } & $87.30\%$  & $23.16\%$ & $4.54$ & $4.68$ & $6.05$ & $3.32$ & \thead{$1000$\\ $100$} \\ [0.5ex] 
 \hline  
\bf{FS} & $85.15\%$  & $14.85\%$ & $3.46$ & $7.11$ & $4.62$ & $3.30$ & \thead{$200$\\$1000$} \\ [0.5ex] 
 \hline
\end{tabular}}
\egroup
\end{center}
The generative mechanism in the third experiment is a frequentist model that deviates from the all noise model considered in Experiment 1. 
It gives an assessment of estimates based on the output from a Lasso query with a fixed $X$ design using the primal and dual problems by varying the sparsity levels in the true generative mechanism. Based on a fixed predictor matrix, we simulate $Y\in \real^n$ in each draw as below for a sparse vector $\beta_{\mathcal{S}}$ with true support $\mathcal{S}\subset \{1,2,\cdots, p\}$
\[Y\lvert X, \beta_{\mathcal{S}} = X_{\mathcal{S}}\beta_{\mathcal{S}}  + \epsilon, \;  \epsilon \sim \mathcal{N}(0, I_n).\]
We use the primal and dual formulation of the optimization in Example \ref{lasso:variants} for providing estimates in a high dimensional sparse problem $n=500,\; p=3000$ and in the low dimensional regime $n=3000,\; p=500$ respectively. We vary the sparsity levels as $|\mathcal{S}|=0, 5, 10, 20$ signals, each with magnitude $7$. Tables \ref{table:expt2:I} and \ref{table:expt2:II} show that the adjusted estimates have superior risk and coverage properties as compared to the unadjusted estimates, both based on a selected model appended to a diffuse prior.
\begin{center}
\small
\vspace{-1.0em}
\captionof{table}{Expt 2- Model II}
\label{table:expt1:II}
\bgroup
\def\arraystretch{1.}
\scalebox{0.8}{\begin{tabular}{ |c | c c |c c |c c| c|}
\hline
  & \multicolumn{2}{c}{Bayesian CR} & \multicolumn{2}{c}{Bayes risk} & \multicolumn{2}{c|}{Lengths} &   \\
\hline
\bf{Query}  & adjusted & unadjusted   & adjusted  & unadjusted   &  adjusted  & unadjusted   & $n,p$\\ [0.5ex] 
 \hline\hline
 \thead{Lasso \\(Fixed $X$) Primal} & $90.99\%$  & $33.86\%$ & $1.49$ & $4.28$ & $4.49$ & $3.34$ & \thead{$200$\\ $1000$}  \\ [0.5ex] 
 \hline
  \thead{ Lasso \\ (Fixed $X$) Dual} & $87.12\%$  & $61.99\%$ & $1.71$ & $2.71$ & $4.22$ & $3.31$  & \thead{$1000$\\ $200$}\\ [0.5ex] 
 \hline
  \thead{Lasso\\ (Random $X$) } & $88.26\%$  & $55.04\%$ & $1.77$ & $ 3.01$ & $4.23$ & $3.31$ & \thead{$1000$ \\$200$}\\ [0.5ex] 
 \hline
 \thead{Carved  Lasso \\ (Random $X$) } & $82.86\%$  & $38.30\%$ & $5.98$ & $6.75$ & $5.92$ & $3.31$ & \thead{$1000$\\ $100$} \\ [0.5ex] 
 \hline  
 \thead{FS} & $72\%$  & $51.40\%$ & $3.19$ & $3.99$ & $4.07$ & $3.29$ & \thead{$200$\\ $1000$}  \\ [0.5ex] 
 \hline
\end{tabular}}
\egroup
\end{center}

\begin{center}
\small
\vspace{-1.0em}
\captionof{table}{Expt 3- Deviation from noise model}
\label{table:expt2:I}
\bgroup
\def\arraystretch{1.2}
\scalebox{0.88}{\begin{tabular}{ |c | c c |c c |c c|}
\hline
$n=500, p=3000$  & \multicolumn{2}{c}{Coverage} & \multicolumn{2}{c}{Risk} & \multicolumn{2}{c|}{Lengths}  \\
\hline
\bf{Sparsity}  & adjusted & unadjusted   & adjusted  & unadjusted   &  adjusted  & unadjusted \\ [0.5ex] 
 \hline\hline
0 & $85.25\%$  & $23.22\%$ & $2.09$ & $5.56$ & $4.52$ & $3.31$  \\ [0.5ex] 
 \hline
 5 & $87.27\%$  & $54.99\%$ & $1.73$ & $3.35$ & $4.06$ & $3.35$  \\ [0.5ex] 
 \hline
 10 & $86.05\%$  & $63.98\%$ & $1.80$ & $2.76$ & $4.05$ & $3.36$  \\ [0.5ex] 
 \hline
 20 & $83.51\%$  & $69.49\%$ & $1.86$ & $3.73$ & $4.10$ & $3.43$  \\ [0.5ex] 
 \hline
\end{tabular}}
\egroup
\end{center}

\begin{center}
\small
\captionof{table}{Expt 3-  Deviation from noise model}
\label{table:expt2:II}
\bgroup
\def\arraystretch{1.2}
\scalebox{0.88}{\begin{tabular}{ |c | c c |c c |c c|}
\hline
$n=3000, p=500$  & \multicolumn{2}{c}{Coverage} & \multicolumn{2}{c}{Risk} & \multicolumn{2}{c|}{Lengths}  \\
\hline
\bf{Sparsity}  & adjusted & unadjusted  & adjusted  & unadjusted  &  adjusted  & unadjusted \\ [0.5ex] 
 \hline\hline
0 & $85.98\%$  & $39.94\%$ & $1.96$ & $3.93$ & $4.36$ & $3.30$  \\ [0.5ex] 
 \hline
 5 & $88.23\%$  & $64.68\%$ & $1.44$ & $2.56$ & $3.76$ & $3.30$  \\ [0.5ex] 
 \hline
 10 & $85.06\%$  & $75.4\%$ & $1.48$ & $1.98$ & $3.68$ & $3.30$  \\ [0.5ex] 
 \hline
 20 & $87.1\%$  & $79.4\%$ & $1.45$ & $1.60$ & $3.71$ & $3.31$  \\ [0.5ex] 
 \hline
\end{tabular}}
\egroup
\end{center}
The final experiment gives the performance of the estimates post the 2-stage screening query with a fixed $X$ design, described in Section \ref{ms:lasso}. We again use both the frequentist Model I and the Bayesian Model II to generate our data. We choose to provide inference using an adaptive target under a model, both of which are determined by the final screened model $E_2$ that combines the output from the two screenings. The coverage and risk comparisons for the above screening procedure are given in table \ref{table:expt3}. The first column gives the Model generating the data and the last column gives the dimensions of the simulation. The only case where the Bayes risk of the adjusted estimate is slightly more than that of the unadjusted posterior mean is for Bayesian model when $n=200, \;p=1000$. 
\begin{center}
\small
\vspace{-1.0em}
\captionof{table}{Expt 4: A 2-stage screening}
\label{table:expt3}
\bgroup
\def\arraystretch{1.}
\scalebox{0.86}{\begin{tabular}{  |c| c c |c c |c c|c|}
\hline
  & \multicolumn{2}{c}{Coverage/ Bayesian CR} & \multicolumn{2}{c}{Risk/ Bayes risk} & \multicolumn{2}{c|}{Lengths} & \\
  \hline
 \thead{Model} & adjusted & unadjusted  & adjusted  & unadjusted  &  adjusted  & unadjusted & $n, p$\\ [0.5ex] 
 \hline\hline
I & $86.52\%$  & $39.20\%$ & $3.22$ & $4.08$ & $5.74$ & $3.40$  & $200, 1000$\\ [0.5ex] 
\hline
I & $89.26\%$  & $29.64\%$ & $2.06$ & $4.27$ & $4.89$ & $3.30$  & $1000, 200$\\ [0.5ex] 
\hline
II & $85.00\%$  & $53.88\%$ & $3.71$ & $3.27$ & $5.66$ & $3.43$  & $200, 1000$\\ [0.5ex] 
\hline
II & $91.86\%$  & $32.39\%$ & $2.01$ & $4.10$ & $4.87$ & $3.31$  & $1000, 200$\\ [0.5ex] 
\hline
\end{tabular}}
\egroup
\end{center}

\subsection{Data analysis: inference on causal variants}
\label{egene:data}
To illustrate the inferential gains with the truncated Bayesian method, we provide adjusted effect size estimates for SNPs (Single nucleotide polymorphisms) that have been data-mined as the strongest associations with gene expression. An analyst will be confronted in defending the strength of these associations if she does not overcome the selective bias encountered in estimation post data-snooping. With gene expression data as the outcome, we give adjusted effect size estimates of SNPs that have been selected as the set of probable causal genetic variants. We highlight the differences between the adjusted Bayesian approach and the unadjusted counterpart (that is inadaptive to selection); we also depict the higher statistical power associated with the adjusted Bayesian estimates post a randomized selection as opposed to the estimates based on \cite{exact_lasso} post a non-randomized selection. 

The data analyzed in this work involves gene expression data $Y \in \real^{97}$ for a gene collected from the human tissue -Liver for a sample of 97 densely genotyped individuals. More details on this data-set are included in the Appendix \ref{egene:data:appendix}. The exceedingly small sample size in this analysis does not allow the analyst to reserve a hold-out data set for inference. The goal here, is to quantify the effect sizes of variants that have been selected from a set of $5233$ of potential predictors, namely $X\in \real^{97\times 5233}$ as predictors that best explain the variance in expression levels of the gene under study. More specifically, the columns of $X$ represent local genetic variants measured as SNPs that lie within 1MB of the transcription start site of the gene. This data has been investigated as a part of a genome-wide association study conducted in \cite{gtex2015genotype, ongen2015fast} with focus on identifying the significant associations between gene expression and genetic variants across different human tissues. The afore-mentioned works aimed at recognizing genes with at least one causal variant, called eGenes. A more recent work \cite{aguet2016local} performs a secondary analysis on the eGenes to further identify variants that regulate the expression for these genes. This involves a search over the local variants around the genes. In this work, we employ one such selection procedure, the commonly used Lasso to pick promising predictors and apply our method to give estimates for effect sizes of these selected SNPs based on the truncated posterior. 

Below, we outline the analysis that leads to the selection of SNPs. To aid interpretability and recovery of a meaningful set of effects, we a perform hierarchical clustering with a minimax linkage on the set of $5233$ SNPs, see \cite{bien2011hierarchical}. The distance measure between SNPs $X_i$ and $X_j$ is defined as $d(X_i, X_j) = 1 -\rho(X_i,X_j)$ where $\rho(X_i,X_j)$ is the empirical correlation between two SNPs $X_i$, $X_j \in \real^{97}$. This algorithm introduced in \cite{ao2004clustag} clusters the SNPs and gives a prototype for each cluster. The number of clusters is chosen so that each of the $5233$ SNPs has a correlation of at least $0.5$ with  at least one of the prototypes. Applying a typical selection procedure like the Lasso on the set of local variants without pruning it to prototypes is not ideal in this analysis as the local variants share substantial empirical correlation; the Lasso will typically suffer from an inability to recover the true set of signals. \cite{reid2016sparse} identifies this shortcoming of the Lasso and proposes inference on effect sizes post a Lasso on prototypes of clusters in such scenarios. While the prototypes in \cite{reid2016sparse} are determined in a greedy fashion; the cluster representative being the most associated with the response, we adapt a completely unsupervised approach here in order to determine the clusters and prototypes with no data-snooping. Using the described hierarchical clustering, we obtain $320$ prototype SNPs, each of which has a correlation of at least $0.5$ with the SNPs in its cluster.
We finally run a randomized Lasso query given by \eqref{lasso:randomized:program} on the prototype SNPs with Gaussian randomization. With the tuning parameter is set at the theoretical $\lambda = \hat{\sigma}\cdot\mathbb{E}(X^T\epsilon)$, the randomized Lasso query selects a set of $21$ potential regulatory variants; $\hat{\sigma}$ is estimated as $0.4$. The ratio of the randomization to noise scale in the data is set at $0.5$. 

We provide inference for the population least squares coefficients that correspond to the selected set $E$ of SNPs. That is, the adaptive target 
\[(X_E^T X_E)^{-1}X_E^T \mathbb{E}(Y\lvert X, \beta)\] is used as a quantification of the effect sizes of the selected SNPs. We assume the selected model on the data for inference, that is, $Y\sim \mathcal{N}(X_E\beta_E, \hat\sigma^2 I)$ and a non-informative prior on $\beta_E$ (similar to the simulations in Section \ref{simulated:models}). Figure \ref{egene:plot:0} gives a comparison of the effect sizes of selected SNPs using the proposed truncated approach with the unadjusted Bayesian estimates. Under the diffuse prior, the unadjusted estimates will be centered around the OLS estimator $(X_E^T X_E)^{-1}X_E^T y$ with variance given by the diagonal entries of $(X_E^T X_E)^{-1}$.  
The optimizations that we solve to obtain the truncated Bayesian estimates are laid out in Section \ref{sampler}.
We note the differences in effect sizes based on the adaptive posterior and the unadjusted posterior; we can see that the selected SNPs at positions $\bf{492, 606, 2960, 3509, 3574}$ will be reported as significantly associated with the gene expression if the analyst did not account for selection. The adjusted inference however, shows that the effect sizes of these SNPs are significantly biased by selection; the adjusted Bayesian intervals for these reportedly significant SNPs cover $0$.
\begin{figure}[h]
\begin{center}
\centerline{\includegraphics[height=9cm,width=12cm]{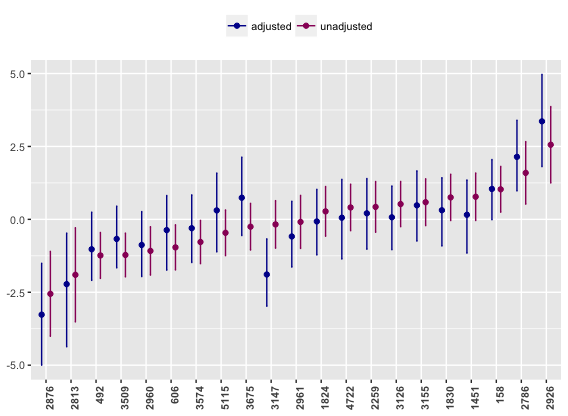}}
\caption{\small{Effect size estimates: posterior mean and credible intervals are based on the truncated and unadjusted Bayesian posterior. The adjusted intervals have an average length of 2.31, the unadjusted intervals have an average length of 1.86.}}
\label{egene:plot:0}
\end{center}
\end{figure} 

We supplement the above randomized effect size estimates with non randomized frequentist inference of \cite{exact_lasso} post the usual Lasso query (without the randomization term in \eqref{lasso:randomized:program}). Figure \ref{egene:plot:1} plots the exact frequentist intervals post a Lasso selection compared against the unadjusted intervals. The non-randomized selection includes $18$ SNPs as opposed to $21$ SNPs picked up by the randomized Lasso; the common SNPs picked by both queries occur at positions $\bf{158, 492, 606, 1830, 2259, 2786, 2876, 2926, 2960, 3155,3509, 3574}$. The exact frequentist intervals adjusted for selection again show that the SNPs at $\bf{492, 606,  2960, 3509, 3574}$ are no longer statistically significant effects, as opposed to the unadjusted estimates. The comparison with the randomized intervals in Figure \ref{egene:plot:0} shows that the estimates post a randomized version of Lasso have more statistical power. This is highlighted in the shorter lengths of the randomized intervals when compared against the exact frequentist intervals of Figure \ref{egene:plot:1}. Adjusted inference post both randomized and non-randomized versions of the Lasso query identifies SNPs at $\bf{158, 2786, 2926}$ to be significantly associated with gene expression.
\begin{figure}[h]
\begin{center}
\centerline{\includegraphics[height=9cm,width=12cm]{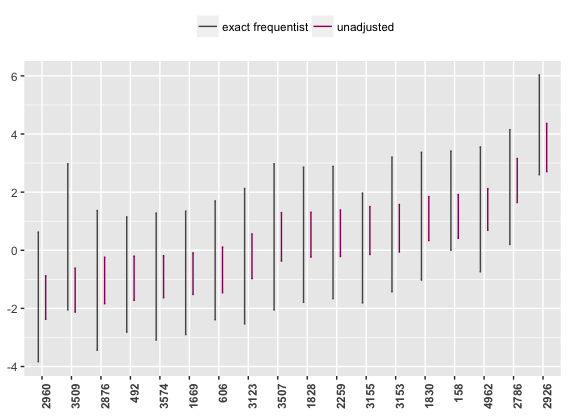}}
\caption{\small{Effect size estimates: adjusted intervals are the exact frequentist intervals constructed by conditioning on the polyhedral selection region of Lasso. Unadjusted intervals are centered around the OLS estimator post Lasso, $(X_E^T X_E)^{-1}X_E^T y$ with variance of $j$-th coefficient given by $(X_E^T X_E)^{-1}_{j,j}$. The adjusted intervals have an average length of 4.25, the unadjusted intervals have an average length of 1.56.}}
\label{egene:plot:1}
\end{center}
\end{figure} 

The simulations in \ref{simulated:models} post different selective queries show that the Bayesian estimates have good frequentist properties under a non-informative prior. We show that the adjusted Bayesian estimates indeed mimic the adjusted frequentist estimates based on \cite{panigrahi2017mcmc} under the diffuse prior for the selected SNPs. Figure \ref{egene:plot:freq} in the Appendix \ref{egene:data:appendix} depicts the adjusted frequentist intervals alongside the Bayesian intervals.

To validate the inferential guarantees of our estimates in the above analysis, we conclude with a simulation design based on the predictor matrix of SNPs $X$ as considered above. We consider a sparse regime varying the number of signals $|S| \in \{0,1,2,3\}$. In this sparse regime, we simulate the response $Y$ from a model based on the $5233$ predictors with $|S|$ true signals as follows:
\begin{itemize}
\item subsample $|S|$ clusters from the $320$ clusters of SNPs (obtained by hierarchical clustering with a minimax linkage, described above)
\item subsample one SNP further from each of the $|S|$ subsampled cluster as the positions of the true signals; the set of true signals is called $\mathcal{S}$ with cardinality $|S|$.
\item draw response $y \in \real^{97}$ as $y = X_{\mathcal{S}}\beta_{\mathcal{S}} + \epsilon; \; \epsilon \sim \mathcal{N}(0, I),$
where $|\beta_{j,\mathcal{S}}|$ are of equal strength for $j \in \mathcal{S}$. We vary the the magnitude of signals over the set $\{10, 5, 2.5\}$ corresponding to roughly $K\sqrt{2\cdot \log p}$ with $K= 4.5, 2, 1;\; p=320$ respectively. These signal strengths correspond to three SNR regimes- strong, moderate and weak signal regime.
\end{itemize}
We evaluate the coverage and risk for the adjusted and unadjusted estimates averaged over the selected SNPs and across repetitions of an experiment with $50$ trials in the $3$ different signal regimes. In each repetition, we provide inference about the ground truth for the population least squares coefficients given by 
\[(X_E^T X_E)^{-1}X_E^T \mathbb{E}(Y\lvert X, \beta) = (X_E^T X_E)^{-1} X_E^T X_{\mathcal{S}}\beta_{\mathcal{S}}\]
under the selected model and non-informative prior as before. Note that the prototypes might not be positions of true signals in our simulation study, thereby, the model we assume for inference might be a misspecified model. We see that even with a misspecified model, the adjusted Bayesian estimates show superior performance than the unadjusted estimates, both in terms of coverage and risk. We also compare the adjusted Bayesian inference post the randomized Lasso with the non-randomized exact frequentist estimates of \cite{exact_lasso}. The blue color gives the adjusted Bayesian inference under the diffuse prior, the grey color represents the exact frequentist inference of \cite{exact_lasso} post a non-randomized Lasso query and the red color denotes the unadjusted Bayesian inference. \cite{exact_lasso} does not give a selection adjusted point estimate; the grey curve in Figure \ref{risk:egene} plots the risk of the OLS estimator $(X_E^T X_E)^{-1}X_E^Ty$ where $E$ is the set of SNPs selected by (non-randomized) Lasso. The labels in the x-axis represent the number of true signals, $|S|= 0,1,2,3$. The columns give comparison of estimates in the three signal regimes: from left to right- strong, moderate and weak corresponding to model of equally contributing signals with varying strengths $10, 5, 2.5$ respectively. 

\begin{figure}[H]
\begin{center}
\centerline{\includegraphics[height=5cm,width=10cm]{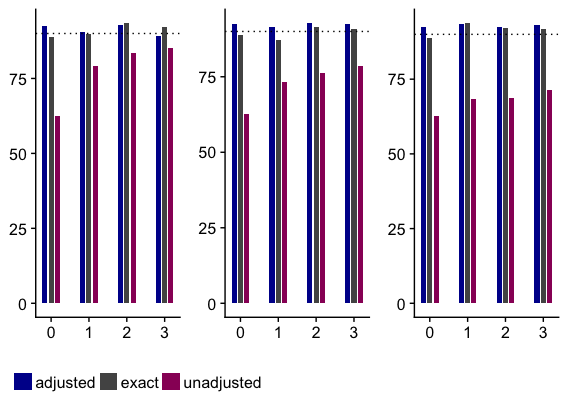}}
\caption{\small{Comparison of coverages in the strong, moderate and weak signal regime: the bar plot depicts the average coverages across 50 replications, averaged over the selected SNPs. The black dotted line marks the $90\%$ target coverage. The adjusted Bayesian intervals and exact \cite{exact_lasso} intervals cover the true target nearly $90\%$ of the total replications. The unadjusted intervals clearly fall short of the target coverage.}}
\label{coverage:egene}
\end{center}
\end{figure} 
\begin{figure}[H]
\vspace{-1cm}
\begin{center}
\centerline{\includegraphics[height=4.5cm,width=9cm]{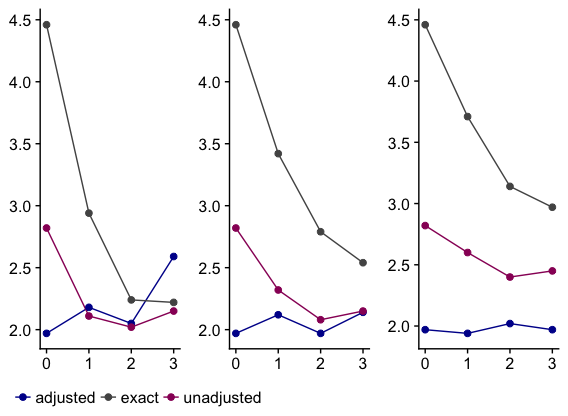}}
\caption{\small{Comparison of risk in strong, moderate and weak signal regime: the adjusted posterior mean has smaller risk than the inadaptive posterior mean post both randomized and non-randomized Lasso. The grey curve depicts the risk of the unadjusted posterior mean post non-randomized Lasso.}}
\label{risk:egene}
\end{center}
\end{figure} 
\begin{figure}[H]
\begin{center}
\centerline{\includegraphics[height=4.5cm,width=9cm]{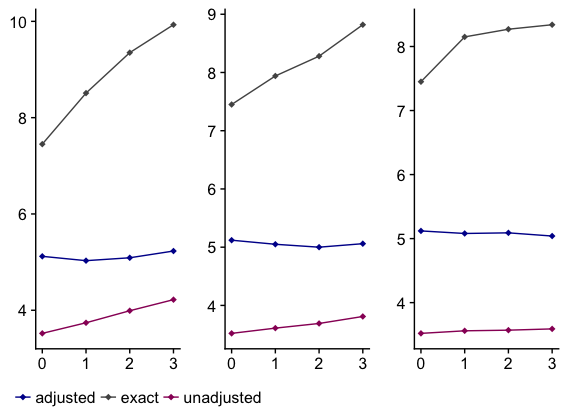}}
\caption{\small{Comparison of lengths in strong, moderate and weak signal regime: the adjusted Bayesian intervals are much shorter than the exact frequentist intervals, they are comparable in length to the unadjusted intervals.}}
\label{lengths:egene}
\end{center}
\end{figure} 
\section{Concluding remarks}
The motivations to adjust for selection through a truncation on the generative model is the same as the frequentist line of works; though the technicalities with imposing a Bayesian model post selection are different. While prior works make progress in formalizing a selective Bayesian methodology, the current work makes significant contributions in proposing a concrete computational recipe to approximate the selective posterior after systematic randomized procedures. The methods extend to multi stage selective queries, marginalizing over randomizations from each selective stage. An attractive property of this approach is scalability in both regimes of inference with empirical demonstration of frequentist coverage with credible intervals and risk of the posterior mean based on the approximate selective posterior. 

An interesting future direction includes establishing a Bernstein von Mises result in the selective Bayesian paradigm. We empirically see that the truncated Bayesian methods somewhat recover frequentist coverage under diffuse priors just as they would in the untruncated regime of inference. From a purely application point of view, we see scope of the methodology in this work to be applied to genome-wide studies where the true model describing the association of phenotypes with variants is assumed to be highly sparse. Inference post identification of causal variants is an important goal; our methods can provide reliable and reproducible  effect size estimates in such settings. The Bayesian model in particular, allows an analyst to leverage information from an objective or subjective prior that can arise from prior experimentation in these studies. Developing tools to sample from an intractable posterior modeled using the truncated framework has been the focus of this paper, this allows the analyst to take full advantage of the Bayesian machinery post selection and provide estimates with better coverage and risk properties than the usual Bayesian estimates.\\

\noindent \textbf{Acknowledgement} 
The data used for the gene expression data analysis described in this manuscript was obtained using dbGaP accession number \textbf{phs000424.v6.p1}. The authors are extremely thankful to Chiara Sabbati for her help in acquisition of the GTEx gene-expression data in this work. The authors would like to thank Asaf Weinstein, with whom they collaborated on formalizing many ideas in \cite{selective_bayesian}. The authors also acknowledge helpful discussions with Chiara Sabbati and Robert Tibshirani which has improved their understanding of the problem. 

\bibliographystyle{plainnat}
\bibliography{references} 

\appendix
\section{Proofs of Theorems}
\label{appendix}

\subsection{Results in Sections \ref{optimization} and \ref{sampler}}
\label{A:1}
\noindent{\textbf{Proof of Theorem \ref{approximate:prob}}}:
\begin{proof} 
To prove this, we derive an upper bound on $\log {\mathbb{P}}((S,O) \in \mathcal{R}\lvert \beta^*)$ in terms of the log-MGF of the augmented random variable $(S,O)$.
\begin{align}
&\log {\mathbb{P}}((S,O) \in \mathcal{R}\lvert \beta^*)\nonumber\\
& \leq \log \mathbb{E}\left[\exp(\sup_{s, o\in \mathcal{R}}\{-\alpha_1^T s -\alpha_2^T o\})\exp(\alpha_1^T S + \alpha_2^T O)\Big\lvert \beta^*\right ]\nonumber\\
&=- \inf_{s, o\in \mathcal{R}}\{\alpha_1^T s + \alpha_2^T o \} -\log \mathbb{E}[\exp(\alpha_1^T S + \alpha_2^T O)\lvert \beta^*] \nonumber
\end{align}
Since the above bound holds for any $\alpha_1\in \real^d$ and $\alpha_2 \in \real^p$, we can optimize over the choices of $\alpha_1, \alpha_2$ to obtain the upper bound 
\begin{equation*}
-\sup_{\alpha_1, \alpha_2} \inf_{s, o\in \mathcal{R}}\{\alpha_1^T s + \alpha_2^T o \} -\log \mathbb{E}[\exp(\alpha_1^T S + \alpha_2^T O)\lvert \beta^*].
\end{equation*}
A minimax equality for convex, compact selection region $\mathcal{R}$ justifies the swapping of infimum and supremum to lead to the bound
\begin{equation}
\label{min:max}
{-\inf\limits_{s,o \in \mathcal{R}}\Big\{\sup\limits_{\alpha_1, \alpha_2}\alpha_1^T s + \alpha_2^T o - \log\mathbb{E}(\exp(\alpha_1^T S + \alpha_2^T O)\lvert \beta^*) \Big\}.}
\end{equation}
The main step is computation of the log-MGF $\log\mathbb{E}(\exp(\alpha_1^T S + \alpha_2^T O)\lvert \beta^*)$, which is possible through the change of measure facilitated by the inversion map in \eqref{randomization:map}.
Using the joint density of the vector $(S,O)$ in \eqref{selective:density} and writing
\[o= P^{-1}(\omega-Ds-q),\]
we have
\begin{align}
\allowdisplaybreaks
&{\E[\exp(\alpha_1^T S + \alpha_2^T O)\lvert \beta^*] }\nonumber\\
&= {\int \exp\left(\alpha_1^T s+ \alpha_2^T o\right) |J| f(s\lvert \beta^*)g(Ds + Po + q)ds do}\nonumber\\
&= {\int \exp\left(\alpha_1^T s+ \alpha_2^T P^{-1}(\omega-Ds-q)\right) f(s\lvert \beta^*)g(\omega)ds d\omega}\nonumber\\
&=  {\exp( -\alpha_2^T P^{-1} q)\int \exp((\alpha_1-D^T {P^{-1}}^T \alpha_2)^T s) f(s\lvert \beta^*)ds}\nonumber\\
&  {\;\;\;\;\;\times \int \exp( \alpha_2^T P^{-1}\omega)g(\omega)d\omega} \nonumber\\
&=  {\exp( -\tilde{\alpha}_2^T q) \E[\exp(\tilde{\alpha}_1 S)\lvert \beta^*)] \cdot\E[\exp(\tilde{\alpha}_2^T \Omega)] }\nonumber\\
&=   {\exp( -\tilde{\alpha}_2^T q)\exp\Lambda_f(\tilde{\alpha}_1\lvert \beta^*)\exp\Lambda_g(\tilde{\alpha}_2)}\nonumber
\end{align}
with $\tilde{\alpha}_1=\alpha_1 -D^T (P^{-1})^T\alpha_2\text{ and }\tilde{\alpha}_2=(P^{-1})^T\alpha_2.$

\noindent Plugging
\begin{equation*}
\label{mgf:data:opt}
{\log \E[\exp(\alpha_1^T S + \alpha_2^T O \lvert \beta^*)]= -\tilde{\alpha}_2^T q +\Lambda_f(\tilde{\alpha}_1\lvert \beta^*)+\Lambda_g(\tilde{\alpha}_2)}
\end{equation*}
into \eqref{min:max} gives the upper bound for $\log {\mathbb{P}}((S,O) \in \mathcal{R}\lvert \beta^*)$ in terms of the log-MGF corresponding to the data $ \Lambda_f(.\lvert \beta^*) $ and the randomization  $\Lambda_g(.)$ as
\begin{align}
& {-\inf\limits_{s, o\in \mathcal{R}}\Big\{\sup\limits_{\tilde{\alpha}_1,\tilde{\alpha}_2}\tilde{\alpha}_1^T s + \tilde{\alpha}_2^T (P o+ D s)- \log\E[\exp(\alpha_1^T S + \alpha_2^T O \lvert \beta^*)]\Big\}}\nonumber\\
&={-\inf\limits_{s,o\in \mathcal{R}}\Big\{\sup\limits_{\tilde{\alpha}_1,\tilde{\alpha}_2}(\tilde{\alpha}_1^T s + \tilde{\alpha}_2^T (P o +D s+ q)- \Lambda_f(\tilde{\alpha}_1\lvert \beta^*)-\Lambda_g(\tilde{\alpha}_2))\Big\}}\nonumber\\
&= {-\inf_{s,o \in \mathcal{R}}  \Lambda_f^*(s\lvert \beta^*) +  \Lambda_g^*(Ds + P o + q)}.\nonumber
\end{align}
\end{proof}

\noindent{\textbf{Proof of Theorem \ref{bound:modified}}}:
\begin{proof}
The volume of the selection region 
\[\mathcal{R} = \mathcal{R}_S \times \mathcal{R}_O=\mathcal{R}_S \times  \mathcal{R}_E\times \prod_{j=1}^{p-|E|} \mathcal{R}_{j,-E}\]
based on decoupling of the randomization density into active and inactive coordinates is given by
\begin{equation*}
\begin{aligned}
& \mathbb{P}((S,O)\in \mathcal{R}\lvert \beta^*) \nonumber \\
&= |J|\cdot\int_{\mathcal{R}_S}f(s\lvert \beta^*)\cdot \int_{\mathcal{R}_E} g_{E}(D_E s+ P_E o_E + q_E)\nonumber \\
&\;\;\;\;\; \cdot  \prod_{j=1}^{p-|E|}\int_{\mathcal{R}_{j,-E}}g_{j,-E}(o_{j,-E} +D_{j,-E}s + P_{j,-E} o_{E}+ q_{j,-E})do_{j,-E} do_{E} ds \nonumber \\
&= |J| \cdot \int_{\mathcal{R}_S}f(s\lvert \beta^*)\cdot \int_{\mathcal{R}_E} g_{E}(D_E s+ P_E o_E + q_E)\mathcal{B}(o_E;s) do_E ds \nonumber \\
&= \mathbb{E}\left[\mathcal{B}(O_E;S)1_{(S,O_E)\in \mathcal{R}_S \times  \mathcal{R}_E}\right \lvert \beta^*]  \nonumber \\
\end{aligned}
\end{equation*}

\noindent An upper bound on $\log\mathbb{E}\left[\mathcal{B}(O_E;S)1_{(S,O_E)\in \mathcal{R}_S \times  \mathcal{R}_E}\right \lvert \beta^*] $ is given by
\begin{equation*}
\setlength{\jot}{1.2em}
\begin{aligned}
& \log \mathbb{E}\left[\mathcal{B}(O_E;S)1_{(S,O_E)\in \mathcal{R}_S \times  \mathcal{R}_E}\right \lvert \beta^*] \nonumber \\
&=\log  \mathbb{E}\left[\exp(\log\mathcal{B}(O_E;S)-\alpha_1^T S-\alpha_2^T O_E)\cdot \exp(\alpha_1^T S+\alpha_2^T O_E)\cdot 1_{(S,O_E)\in \mathcal{R}_S \times  \mathcal{R}_E}\right \lvert \beta^*] \nonumber \\
&\leq \sup_{s\in \mathcal{R}_S, o_E \in \mathcal{R}_E}\left\{\log\mathcal{B}(o_E;s)-\alpha_1^T s-\alpha_2^T o_E\right\} + \log\mathbb{E}\left[ \exp(\alpha_1^T S+\alpha_2^T O_E)\right \lvert \beta^*] \nonumber 
\end{aligned}
\end{equation*}
Optimizing over $\alpha_1\in \real^d$ and $\alpha_2\in \real^p$, we have 
\begin{equation*}
\setlength{\jot}{1.2em}
\begin{aligned}
& \log \mathbb{E}\left[\mathcal{B}(O_E;S)1_{(S,O_E)\in \mathcal{R}_S \times  \mathcal{R}_E}\right \lvert \beta^*] \nonumber \\
&\leq \inf_{\alpha_1, \alpha_2}\sup_{s\in\mathcal{R}_S, o_E \in \mathcal{R}_E}\left\{\log\mathcal{B}(o_E;s)-\alpha_1^T s-\alpha_2^T o_E\right\} + \log\mathbb{E}\left[ \exp(\alpha_1^T S+\alpha_2^T O_E)\right \lvert \beta^*] \nonumber \\
&=  \Scale[.95]{-\sup\limits_{\alpha_1, \alpha_2}\Big\{ \inf\limits_{s\in \mathcal{R}_S, o_E \in \mathcal{R}_E}\left\{\alpha_1^T s+ \alpha_2^T o_E-\log\mathcal{B}(o_E;s)\right\} - \log\mathbb{E}\left[ \exp(\alpha_1^T S+\alpha_2^T O_E)\right \lvert \beta^*] \Big\}}.
\end{aligned}
\end{equation*}
By a minimax equality for compact, convex selection region $\mathcal{R}_S \times \mathcal{R}_E$ and the expression for log-MGF using the change of measure derived in the proof of Theorem \ref{approximate:prob}, we have the result.
\end{proof}

\noindent{\textbf{Proof of Theorem \ref{dual:opt:appprox}:}}
\begin{proof}
With the introduction of variable $v= Ds + Po + q$, the dual of optimization 
\[{ \inf_{s,o} \left\{ \Lambda_f^*(s\lvert \beta^*) +  \Lambda_g^*(Ds + P o + q)+ b_{\mathcal{R}_O}(o)\right\}}\] in terms of dual variable $u\in \mathbb{R}^{p}$
\begin{equation*}
\begin{aligned}
{\sup\limits_{u}\inf\limits_{s,o,v}\Big\{ \Lambda_f^*(s\lvert \beta^*) +  \Lambda_g^*(v) +  b_{\mathcal{R}_O}(o) + u^T (v -Ds - P o - q)\Big\}}
\end{aligned}
\end{equation*}
Solving Lagrangian $\mathcal{L}(u)$ over variables $(s,o,v)$
\begin{align}
{ \inf_{s,o,v}\Big\{ \Lambda_f^*(s\lvert \beta^*) +  \Lambda_g^*(v) + b_{\mathcal{R}_O}(o)+ u^T (v -Ds - P o - q)\Big\}}\nonumber
\end{align}
gives the optimizing equations
\begin{equation*}
s = \grad \Lambda_f(D^T u\lvert \beta^*); v = \grad \Lambda_g(- u),o = \grad {b^*_{\mathcal{R}_O}}^{-1}(P^{T} u).
\end{equation*}
This yields 
\[\mathcal{L}(u)= -\Lambda_f(D^T u\lvert \beta^*)-  \Lambda_g(-u)- b^*_{\mathcal{R}_O}(P^T u) -u^T q\]
and hence, follows \eqref{dual:barrier}.
\end{proof}

\noindent{\textbf{Proof of Theorem \ref{gradient:posterior}}:}
\begin{proof}
Plugging in the conjugate of the log-Gaussian MGF of data vector $S$, the logarithm of the pseudo posterior can be written as
\begin{equation*}
\begin{aligned}
\log \tilde{\pi}_E(\beta^*\lvert s) &= K + \log \pi(\beta^*) - \frac{(s-\mu(\beta^*))^T\Sigma_f^{-1}(s-\mu(\beta^*))}{2} \\
&-\sup_{z\in \real^d}\left\{ z^T\Sigma_f^{-1}\mu(\beta^*) -\frac{1}{2}z^T \Sigma_f^{-1}\ z - \inf_{o\in \mathbb{R}^p}\left\{ \Lambda_g^*(Dz + Po +q) + b_{\mathcal{R}_O}(o)\right\}\right\}\\
&= K + \log \pi(\beta^*) - \frac{(s-\mu(\beta^*))^T\Sigma_f^{-1}(s-\mu(\beta^*))}{2} - \delta^*(\Sigma_f^{-1}\mu(\beta^*)) \\
&\;\;\;+ \frac{\mu(\beta^*)^T\Sigma_f^{-1}\mu(\beta^*)}{2}
\end{aligned}
\end{equation*}
for constant $K=-d \log 2\pi/2 - \log |\Sigma_f|/2$ and $\delta^*$ representing the conjugate of 
\[\delta(z) = \frac{1}{2}z^T \Sigma_f^{-1}\ z + \inf_{o\in \mathbb{R}^p}\left\{ \Lambda_g^*(Dz + Po +q) + b_{\mathcal{R}_O}(o)\right\}.\]
 The derivative of the log-pseudo posterior is thus given by
\begin{equation*}
\begin{aligned}
 \cfrac{\partial \log \tilde{\pi}_E(\beta^*\lvert s)}{\partial \beta^*} &= \cfrac{\partial\log\pi(\beta^*)}{\partial \beta^*} + \left(\cfrac{\partial\mu}{\partial \beta^*}\right)^T \Sigma_f^{-1}s-\left(\cfrac{\partial\mu}{\partial \beta^*}\right)^T \Sigma_f^{-1}\grad \delta^*(  \Sigma_f^{-1}\mu)\\
&= \cfrac{\partial\log\pi(\beta^*)}{\partial \beta^*} + \left(\cfrac{\partial\mu}{\partial \beta^*}\right)^T \Sigma_f^{-1}(s-s^*(\Sigma_f^{-1}\mu(\beta^*)) \\
\end{aligned}
\end{equation*}
for $s^*$ satisfying
\[\arg\inf_{z\in \mathbb{R}^d}\left( z^T\Sigma_f^{-1}\mu(\beta^*) -\frac{1}{2}z^T \Sigma_f^{-1}\ z - \inf_{o\in \mathbb{R}^p}\left\{ \Lambda_g^*(Dz + Po +q) + b_{\mathcal{R}_O}(o)\right\}\right ).\]
The last equality follows by noting that 
\[\grad \delta^*(  \Sigma_f^{-1}\mu) = \grad \delta^{-1}(  \Sigma_f^{-1}\mu) = s^*( \Sigma_f^{-1}\mu).\]
\end{proof}

\subsection{Results in Section \ref{LDP}}
\label{A:3}
\noindent{\textbf{Proof of Theorem \ref{LDP:thm}}}
\begin{proof} 
\begin{enumerate}[(1).]
\item 
We can write $\lim_n \frac{1}{n}\log \mathbb{P}(\bar{S}_n  \in \mathcal{R}_S^{'}, \bar{O}_n \in \mathcal{R}_O^{'}\lvert \beta_E)$ as
\begin{equation*}
\begin{aligned}
& \lim_n \frac{1}{n}\log \mathbb{E}\left[\mathbb{P}(\bar{O}_n \in \mathcal{R}_O^{'} \lvert \bar{S}_n=s)1_{\bar{S}_n \in  \mathcal{R}_S}\lvert \beta_E \right]\\
&= \lim_n \frac{1}{n}\log \mathbb{E}\left[\exp\left(n \cdot \frac{1}{n} \log \mathbb{P}(\bar{O}_n \in \mathcal{R}_O^{'} \lvert \bar{S}_n=s)\right)1_{\bar{S}_n \in  \mathcal{R}_S^{'}}\Big\lvert \beta_E \right].
\end{aligned}
\end{equation*}
Note that  
\[H_n(s)= \frac{1}{n} \log \mathbb{P}(\bar{O}_n \in \mathcal{R}_O^{'} \lvert \bar{S}_n=s)\]
satisfies the limit \eqref{LDP:rate:optimization}. We also use the observation that 
\[H^{'}_n(s) = -\inf_{o\in \mathcal{R}_O^{'}} \Lambda_g^*(D s+ Po+ q/\sqrt{n}) \]
satisfies
$\lim_n H^{'}_n(s)  = -\inf_{o\in \mathcal{R}_O^{'}} \Lambda_g^*(D s+ Po);$
this follows as the limiting sequence of objectives, composition of an affine map with the conjugate of log-Gaussian MGF, is convex and converges to a non-monotonic convex objective $\Lambda_g^*(D s+ Po)$. These two facts ensure that $H_n(.)$ and $H^{'}_n(.)$ are two sequences of continuous functions that converge uniformly on the set $\mathcal{R}_S$ to the limit $H(s) = -\inf_{o\in \mathbb{R}_O^{'}} \Lambda_g^*(D s+ Po)$. A direct application of Lemma \ref{varadhan:lemma} now leads to the limiting rate
\begin{equation*}
\begin{aligned}
&\lim_{n}\frac{1}{n} \log \mathbb{E}\left [\exp(n H_n(\bar{S}_n))1_{\bar{S}_n \in  \mathcal{R}_S}\lvert \beta_E\right ]\\
&=  \lim_{n}\frac{1}{n} \log \mathbb{E}\left [\exp(n H^{'}_n(\bar{S}_n))1_{\bar{S}_n \in  \mathcal{R}_S}\lvert \beta_E\right ]\\
&= -\inf_{s\in \mathcal{R}_S^{'}, o \in \mathcal{R}_O^{'}} \{\Lambda_f^*(s\lvert \beta_E) + \Lambda_g^*(Ds + Po)\}\\
&= -\lim_n\inf_{s\in \mathcal{R}_S^{'}, o \in \mathcal{R}_O^{'}} \{\Lambda_f^*(s\lvert \beta_E) + \Lambda_g^*(Ds + Po +q/\sqrt{n}) \}.
\end{aligned}
\end{equation*}
\item For the second part, we use the tower property of expectation to have 
\[\Scale[0.96]{\mathbb{P}\left(\bar{O}_n \in \mathcal{R}_O^{'} \lvert \bar{S}_n=s\right) = \mathbb{E}\left[\mathbb{P}(\bar{O}_{-E,n} \in \mathcal{R}_{-E}^{'}\lvert  \bar{O}_{E,n} = o_E, \bar{S}_n=s)1_{\bar{O}_{E,n}\in \mathbb{R}_{E}^{'}}\lvert \bar{S}_n=s\right]}\]
where $\bar{O}_{-E,n}$ is the vector of $E$ coordinates of $\bar{O}_n$ and similarly, $\bar{O}_{E,n}$ is defined.
Denoting 
 \[G_n^s(o_E)= \frac{1}{n}\log  \mathbb{P}(\bar{O}_{-E,n} \in \mathcal{R}_{-E}^{'}\lvert  \bar{O}_{E,n} = o_E, \bar{S}_n=s) = \frac{1}{n}\log \mathcal{B}(o_E, s)\]
 we know that $G_n^s(o_E)$ converges to a continuous function $G^s(o_E)$ uniformly on $o_E\in\mathcal{R}^{'}_E$ using a large deviation rate. Applying Lemma \ref{varadhan:lemma} gives
\begin{equation*}
\begin{aligned}
& \lim_n \frac{1}{n}\log\mathbb{P}\left(\bar{O}_n \in \mathcal{R}_O^{'} \lvert \bar{S}_n=s\right)\\
&={ \lim\limits_n \cfrac{1}{n}\log \mathbb{E}\left[ \exp\left(n \cdot \frac{1}{n}\log \mathcal{B}(\bar{O}_E,  s)\right)1_{\bar{O}_E\in \mathbb{R}_{E}^{'}}\lvert \bar{S}_n=s\right]}\\
&{= -\inf\limits\limits_{o_E \in \mathcal{R}^{'}_E}\left\{ \Lambda_{g_E}^*(D_E  s + P_E o_E +q_E/\sqrt{n})-G^s(o_E)\right\}.}
 \end{aligned}
\end{equation*}
Also, note that $G_n^s(o_E)$ is a sequence of continuous, concave functions in $o_E,s$ converging to a concave function $G^s(o_E)$. The concavity in $o_E$ follows from the fact that convolution of log-concave densities with a log-concave indicator preserves concavity. Thus, we have
\begin{equation*}
\begin{aligned}
&\inf\limits_{o_E\in \mathcal{R}^{'}_E}\left\{ \Lambda_{g_E}^*(D_E  s + P_E o_E +q_E/\sqrt{n})-G^s(o_E)\right\}\\
&=\lim\limits_n \inf\limits_{o_E\in \mathcal{R}^{'}_E}\left\{ \Lambda_{g_E}^*(D_E  s + P_E o_E +q_E/\sqrt{n})-G_n^s (o_E)\right\} \\
\end{aligned}
\end{equation*}
Denoting $H_n(s) = -\inf\limits_{o_E\in \mathcal{R}^{'}_E}\left\{ \Lambda_{g_E}^*(D_E  s + P_E o_E +q_E/\sqrt{n})-G_n^s (o_E)\right\}$, we finally note that $H_n(.)$ is a sequence of continuous functions with a uniform limit $H(.)$ on $\mathcal{R}_S$. This completes the proof of the second part with similar arguments as the first part of the theorem.
\end{enumerate}
\end{proof}

\noindent{\textbf{Proof of Lemma \ref{varadhan:lemma}}}
\begin{proof}
Using the uniform convergence of $H_n(.)$ to $H(.)$ on compact $\mathcal{R}\in \real^d$, we have for any $x \in \mathcal{R}^{0}$ such that
$\Lambda^*(x) -H(x)<\infty$
and $\delta >0$ such that $\mathcal{B}(x, \delta) \subset \mathcal{R}^{0}$
\begin{equation*}
\begin{aligned}
&\liminf_n \frac{1}{n}\log\mathbb{E}[\exp(n H_n(\bar{Z}_n)) 1_{\bar{Z}_n \in \mathcal{R}}] \\
&= \liminf_n \frac{1}{n}\log\mathbb{E}\left[\exp(n (H_n(\bar{Z}_n)- H(\bar{Z}_n)) \exp(n H(\bar{Z}_n)) 1_{\bar{Z}_n\in \mathcal{R}}\right] \\
&\geq  -\lim_n \sup_{z\in \mathcal{R}} |H_n(z)- H(z)| + \liminf_n \frac{1}{n}\log\mathbb{E}\left[\exp(n H(\bar{Z}_n)) 1_{\bar{Z}_n \in \mathcal{R}^0}\right] \\
&\geq \liminf_n \frac{1}{n}\log\mathbb{E}\left[\exp(n H(\bar{Z}_n)) 1_{\bar{Z}_n \in \mathcal{B}(x, \delta)}\right].
\end{aligned}
\end{equation*}
Let $C>0$ be chosen such that it satisfies for $x\in \mathcal{R}^{0}$ 
\begin{equation}
\label{choice:C}
C > \Lambda^*(x) -H(x) + \liminf_n \frac{1}{n}\log\mathbb{E}\left[\exp(nH(\bar{Z}_n))1_{\bar{Z}_n \in \mathcal{B}^c(x, \delta)}\right].
\end{equation} 
where $\mathcal{B}^c(x, \delta)=\{z\in \real^d: z \notin \mathcal{B}(x, \delta)\}$.
Defining a continuous bounded function $\Psi(.)$ as 
\[\Psi(y) = C\cdot \min\left(\cfrac{d(x,y)}{\delta},1\right)\]
\begin{equation*}
\begin{aligned}
\mathbb{E}\left[ \exp(-n \Psi(\bar{Z}_n) + n H(\bar{Z}_n))\right] &\leq \exp(-n C) \cdot \mathbb{E}\left[ \exp(n H(\bar{Z}_n)) 1_{\bar{Z}_n \in \mathcal{B}^c(x, \delta)}\right]\\
&\;\;\;\;+ \mathbb{E}\left[ \exp(n H(\bar{Z}_n)) 1_{\bar{Z}_n \in \mathcal{B}(x, \delta)}\right].
\end{aligned}
\end{equation*}
Using Varadhan's limit lemma that states for $\bar{Z}_n$ satisfying a large deviation principle with a rate function $\Lambda^*(.)$
\[\lim_n \frac{1}{n}\mathbb{E}[\exp(-n \Psi(\bar{Z}_n) + n H(\bar{Z}_n))] = -\inf_{z} \{\Lambda^*(z) + \Psi(z)-H(z)\},\] 
and the above bound for $\mathbb{E}\left[ \exp(-n \Psi(\bar{Z}_n) + n H(\bar{Z}_n))\right]$, we have
\begin{equation*}
\begin{aligned}
&\max\Big(-C  + \frac{1}{n}\liminf_n \mathbb{E}\left[ \exp(n H(\bar{Z}_n)) 1_{\bar{Z}_n \in \mathcal{B}^c(x, \delta)}\right], \\
&\;\;\;\;\;\;\;\;\;\;\;\;\;\;\;\;\frac{1}{n}\liminf_n \mathbb{E}\left[ \exp(n H(\bar{Z}_n)) 1_{\bar{Z}_n \in \mathcal{B}(x, \delta)}\right]\Big)\\
& \geq \liminf_n \frac{1}{n}\log \mathbb{E}\left[ \exp(-n \Psi(\bar{Z}_n) + n H(\bar{Z}_n))\right]\\
&= -\inf_{z} \{\Lambda^*(z) + \Psi(z)-H(z)\}\geq -\Lambda^*(x) + H(x). 
\end{aligned}
\end{equation*}
Due to the choice of $C$ in \eqref{choice:C}, we can complete the proof of the lower bound on the limit of infimums by observing
\begin{equation*}
\begin{aligned}
\liminf_n \frac{1}{n}\log\mathbb{E}[\exp(n H_n(\bar{Z}_n)) 1_{\bar{Z}_n \in \mathcal{R}}] &\geq \liminf_n \frac{1}{n}\log\mathbb{E}\left[\exp(n H(\bar{Z}_n)) 1_{\bar{Z}_n \in \mathcal{B}(x, \delta)}\right] \\
&\geq -\Lambda^*(x) + H(x)\\
&\geq -\inf_{z\in \mathcal{R}^{0}}\{\Lambda^*(z)-H(z)\}\\
&=  -\inf_{z\in \mathcal{R}}\{\Lambda^*(z)-H(z)\}.
\end{aligned}
\end{equation*}

To prove the upper bound, let $\phi_j = j \min(d(z, \mathcal{R}),1)$ be a sequence of bounded continuous functions increasing to $\chi_{\mathcal{R}}(.)$, the characteristic function of $\mathcal{R}$. Again using the uniform convergence of $H_n(.)$ to $H(.)$ on $\mathcal{R}$ and Varadhan's limit lemma for continuous bounded function $\phi_j(.)$,
we have
\begin{equation*}
\begin{aligned}
&\limsup_n \frac{1}{n}\log\mathbb{E}[\exp(n H_n(\bar{Z}_n)) 1_{\bar{Z}_n \in \mathcal{R}}) ] \\
&\leq \limsup_n \frac{1}{n}\log\mathbb{E}[\exp(n(H_n(\bar{Z}_n)- \phi_j(\bar{Z}_n))]\\
&\leq \limsup_n  \frac{1}{n}\log\mathbb{E}[\exp( n \sup_{z\in \mathcal{R}}|H_n(z)-H(z)| + n H(\bar{Z}_n) -n\phi_j(\bar{Z}_n)) ] \\
&= \lim_n  \sup_{z\in \mathcal{R}}|H_n(z)-H(z)|  + \lim_n \frac{1}{n}\log\mathbb{E}[\exp(n H(\bar{Z}_n)- n\phi_j(\bar{Z}_n))] \\
&= -\inf_{z} \left\{\Lambda^*(z) - H(z) + \phi_j(z)\right\}\\
&\leq -\liminf_{j\to \infty}\inf_{z} \left\{\Lambda^*(z) - H(z) + \phi_j(z)\right\}\leq -\inf_{z\in \mathcal{R}} \left\{\Lambda^*(z) - H(z)\right\}.
\end{aligned}
\end{equation*}
The penultimate step follows by applying Varadhan's lemma for a continuous, bounded function $H(.)- \phi_j(.)$.
The proof is thus complete by showing 
\[\liminf_{j\to \infty}\inf_{z} \left\{\Lambda^*(z) - H(z) + \phi_j(z)\right\} \geq \inf_{z\in \mathcal{R}} \left\{\Lambda^*(z) - H(z)\right\}.\]
Since $\phi_j(.) = 0$ on $\mathcal{R}$, it suffices to prove
\[\liminf_{j\to \infty}\inf_{z\in \mathcal{R}^c} \left\{\Lambda^*(z) - H(z) + \phi_j(z)\right\} \geq \inf_{z\in \mathcal{R}} \left\{\Lambda^*(z) - H(z)\right\}=\mathcal{L}.\]
Suppose the above claim is untrue. Then, there exists a subsequence $j_k$, $z_{j_k} \in  \mathcal{R}^c$ and $0<\epsilon<\mathcal{L}/2$ such that
\[\Lambda^*(z_{j_k}) - H(z_{j_k}) + \phi_{j_k}(z_{j_k})\leq \mathcal{L}-\epsilon.\]
This would in turn imply $d(z_{j_k}, \mathcal{R})\to 0$ as $k\to \infty$. This also means $\sup_{k}\Lambda^*(z_{j_k}) - H(z_{j_k}) \leq \mathcal{L}-\epsilon$. There exists a $z^*$ such that $\Lambda^*(z^*) - H(z^*) \leq \mathcal{L}-\epsilon$ and a further subsequence such that $d(z_{j_{k_l}}, z^*) \to 0$. For this subsequence, we know that we can construct a sequence $y_{j_{k_l}}\in \mathcal{R}$ such that 
$d(z_{j_{k_l}},y_{j_{k_l}}) \to 0.$ This would imply that $d(y_{j_{k_l}},z^*) \to 0$ and the consequence of this is that $z^* \in \mathcal{R}$ which shall lead to the contradiction \[ \Lambda^*(z^*) - H(z^*)> \mathcal{L} = \inf_{z\in \mathcal{R}} \left\{\Lambda^*(z) - H(z)\right\}.\]
\end{proof}

\section{Conjugates of barrier function}
\label{dual:details}
The barrier function for the canonical sign and cube constraints that we use in our implementations are 
\[b_{\mathcal{R}_{j,E}}(o_{j,E}) =\log\left(1 +\cfrac{1}{s_{j,E} o_{j,E}}\right) \text{ and }\] 
\[b_{\mathcal{R}_{j,-E}}(o_{j,-E}) = \log\left(1+ \cfrac{1}{\lambda - o_{j,-E}}\right) + \log\left(1+ \cfrac{1}{\lambda + o_{j,-E}}\right)\]
respectively. The conjugate for the sign barrier function at $P_{j,E}^T u$ can be computed as  
\[b_{\mathcal{R}_{j,E}}^*(v) = \sup_{z_j}z_j v - \log\left(1 +\cfrac{1}{s_{j,E} z_j}\right) \text{ at } v= P_{j,E}^T u.\]
The optimal $z_j^*$ maximizing the above optimization problem is given by
\[ z_j^*= \begin{cases} 
     -\cfrac{1}{2} + \sqrt{\cfrac{1}{4} -\cfrac{1}{v}} &\text{ if } v<0 \\
       \text{ no root } &\text{ otherwise}
   \end{cases}\]
whenever $s_{j,E}=1$ and
\[z_j^* = \begin{cases} 
     \cfrac{1}{2} - \sqrt{\cfrac{1}{4} + \cfrac{1}{v}} &\text{ if } v>0 \\
       \text{ no root } &\text{ otherwise}
   \end{cases}\]
whenever $s_{j,E}=-1$.
The conjugate for the cube barrier that reflect the inactive constraints does not have a closed form expression. Yet, we may employ an easy binary search method to find the roots of $p-|E|$ univariate separable conjugate problems
\[b_{\mathcal{R}_{j,-E}}^*(v) = \max_{z_j\in [-\lambda,\lambda]} v z_j -b_{\mathcal{R}_{j,-E}}(z_j) \text{ at } v= P_{j,-E}^T u.\]

We may alternatively choose a log-barrier on inactive coordinates in which case the barrier function is given by
\[b_{\mathcal{R}_{j,-E}}(o_{j,-E}) = -\log(\lambda - o_{j,-E}) - \log(\lambda + o_{j,-E})\] whose conjugate has an explicit form of roots within  $[-\lambda, \lambda]$. That is, solving the optimal point $z_j^*$ that yields
$$b_{\mathcal{R}_{j,-E}}^*(v)  =\max_{z_j \in [-\lambda, \lambda]} v z_j + \log(\lambda -z_j) + \log(\lambda + z_j) $$
is given by
\[z_j^*=\begin{cases} 
   -  \cfrac{1}{v} + \sqrt{\cfrac{1}{v^2} + \lambda^2} &\text{ if } v>0 \\[1em]
   - \cfrac{1}{v} - \sqrt{\cfrac{1}{v^2} + \lambda^2}      &\text{ if } v<0.
   \end{cases}\]

\section{Supplementary to data analysis: inference on causal variants}
\label{egene:data:appendix}
\textbf{Details of data}\\
We give below the details of the gene expression data set analyzed in \ref{egene:data} under Section \ref{experiments}. The data set consists of an outcome variable that represents gene expression levels of a gene with ID \textbf{``ENSG00000131697.13"}, sampled for $97$ individuals from the tissue of Liver. Both the gene expression outcome and the genotypic data consisting of local variants measured within 1 MB up and downstream from the transcription gene site are a part of the GTEx project \url{https://www.gtexportal.org/home/}. The DNA genotyping on blood-derived DNA samples of these individuals was performed at the GTEx Laboratory Data Analysis and Coordination Center (LDACC) at the Broad Institute. More details on the sample procurement, gene, variants inclusion and reads of gene-level expression can be found in the papers \cite{carithers2015novel, gtex2015genotype, aguet2016local}. The gene under study has been analyzed in \cite{aguet2016local} as part of an eQTL study. In fact, the mentioned paper aimed at discovering cis-eQTLs that are associations between local genetic variation and gene expression. This work also conducted a secondary analysis on genes that are believed to have at least one regulatory variant to identify potential causal variants. It is of natural interest for the biologist to be able to give reproducible estimates of the effect sizes of these discovered variants post a search over the set of all variants that leads to reporting/identifying the promising ones. 
We apply our methods in \ref{egene:data} to produces estimates for the effect sizes of these possible regulatory variants, chosen through a Lasso analysis.

\medskip
\noindent\textbf{Comparison of the Bayesian estimates with frequentist approach}
We have seen in Section \ref{experiments} that the Bayesian estimates under a flat prior and modeled along the conditional approach have good frequentist properties like coverage and risk. For interested readers, we also compute the adjusted frequentist estimates using the methods in \cite{panigrahi2017mcmc}. The mentioned paper uses a sampler free approach to solve for an intractable pivot and provides adjusted intervals and an approximate selective MLE based on the truncated likelihood, conditioned on the selection event. For the discovered SNPs post a randomized Lasso on prototype SNPs as described in \ref{egene:data}, we plot the frequentist estimates alongside the Bayesian estimates under the diffuse prior. The fact that the Bayesian intervals and posterior mean mimic the frequentist intervals and selective MLE validates that our Bayesian approach displays the Bernstein von Mises phenomenon as the unadjusted Bayesian estimates do.
\begin{figure}[h]
\begin{center}
\centerline{\includegraphics[height=8cm,width=12cm]{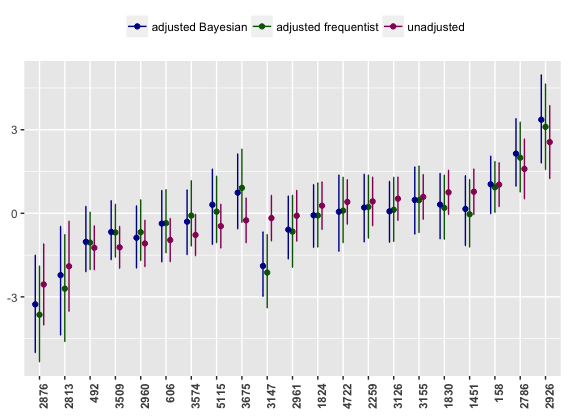}}
\caption{\small{Effect size estimates: adjusted posterior mean and credible intervals based on the truncated and unadjusted Bayesian posterior; selective MLE and confidence intervals are based on the sampler free approach in \cite{panigrahi2017mcmc}. The Bayesian estimates, both posterior mean and intervals mimic the frequentist estimates.}}
\label{egene:plot:freq}
\end{center}
\end{figure} 

\end{document}